\documentclass[12pt,a4paper]{article}

\usepackage{amscd,amsfonts,amsmath,amssymb,amsthm,bbm,bm,latexsym,mathrsfs}
\usepackage{epsfig,graphics,graphicx,natbib,subfigure,longtable}
\usepackage[english]{babel}
\usepackage[usenames]{color}
\usepackage{bookmark}
\usepackage{booktabs}
\usepackage{sgame}
\usepackage[top=1.7in, bottom=1.7in, left=0.5in, right=0.5in]{geometry}
\allowdisplaybreaks
\makeatother

\tolerance=10000
\pretolerance=10000

\newtheorem{theorem}{Theorem}[section]

\newtheorem{definition}{Definition}[section]

\theoremstyle{remark}

\theoremstyle{plain}

\newcommand{\real}{\ensuremath{\mathbb{R}}}

\newcommand{\one}{\mathbbm{1}}

\begin{document}

\title{\vspace{-50pt}{Loss-based approach to two-piece location-scale distributions with applications to dependent data}} 

\author{
\hspace{-20pt} 
Fabrizio Leisen\textsuperscript{a} \hspace{15pt}
 Luca Rossini\textsuperscript{b} \hspace{15pt} Cristiano Villa\textsuperscript{a}
        \\ 
        \vspace{5pt}
        \\
        {\centering {\small \textsuperscript{a}University of Kent, U.K. \hspace{5pt}
        \textsuperscript{b}Vrije Universiteit Amsterdam, The Netherlands}}}

\date{}
\maketitle

\abstract{
Two-piece location-scale models are used for modeling data presenting departures from symmetry. In this paper, we propose an objective Bayesian methodology for the tail parameter of two particular distributions of the above family: the skewed exponential power distribution and the skewed generalised logistic distribution. We apply the proposed objective approach to time series models and linear regression models where the error terms follow the distributions object of study. The performance of the proposed approach is illustrated through simulation experiments and real data analysis. The methodology yields improvements in density forecasts, as shown by the analysis we carry out on the electricity prices in Nordpool markets. \\

\textbf{Keywords: } Bayesian inference, loss-Based prior, objective Bayes,,  electricity prices.\\

\textbf{MSC}: 62F15, 62M10.

}


\section{Introduction}
\label{Intro}

Two-piece location-scale models have been mainly used for modeling data exhibiting departures from symmetry. Moreover, some specific two-piece location-scale distributions have been employed in finance to represent  the errors in GARCH-type models, see \cite{ZhuZinde09}, \cite{ZhuGal11}. Different mechanisms have been presented to obtain skewed distributions by modifying symmetric distributions \citep{Azzalini85, FerSteel98, MudHud00}. Recently, the objective Bayesian literature focused on this class of models. Firstly, \cite{RubSte14} derived the Jeffreys rule prior and the independence Jeffreys priors for different families of skewed distributions. They show that Jeffreys priors for some distributions, such as the skewed Student-$t$,  lead to improper posterior distributions. Conversely, reference priors have shown to be more suitable for the above class of distributions, see  \cite{TuWangSun16}. 

In this work, we introduce a novel objective prior for some distributions of the class of two-piece location-scale models, such as the skewed exponential power distribution (SEPD) and the skewed generalized logistic distribution (SGLD). Following \cite{LeiMarVilla16}, we introduce a Bayesian approach obtained by applying the loss-based prior discussed in \cite{VillaWalker15}. In particular, we derive the loss-based prior for the parameter that controls heaviness of the tails of the distribution. 

In the literature, the asymmetric Laplace distribution (ALD) or the asymmetric Student-$t$ distribution (AST) have gained importance in a wide range of disciplines, such as economics \citep{ZhaoZhang07, LeiMarVilla16}, financial analysis \citep{ZhuGal10, KozPod01, Harvey2016} and microbiology \citep{RubioSteel11}. 
However, the application of the SEPD and SGLD to represent the errors of time series and regression models, has received limited attention in the context of objective Bayesian analysis. The aim of this paper is to contribute to the above research area by introducing an information theoretical approach to address inference on the tail parameter of the two skewed distributions. 

As currently there is a growing interest in electricity prices (see \cite{Weron2014} and \cite{Nowotarski2018} for a review),  we will contribute to the analysis of monthly electricity prices in the Nordpool market, in particular for Denmark and Finland through an autoregressive model with errors distributed as a SEPD. Compared to the standard frequentist autoregressive approach, which is the benchmark in the literature (see  \cite{Conejo2005}, \cite{Misiorek2006} and \cite{Maciejowska2015}), we can show that our methodology  improves the density forecasting. 
In addition, we consider a linear regression model where the residuals are SGLD with a loss-based prior on the tail parameter. We illustrate the above model by studying the Small Cell Cancer data set in \cite{Ying1995} and in \cite{RubioYu2017}.

The structure of this document is as follows. In Section \ref{Prelim} we introduce the general two-piece location-scale distribution and discuss special distributions further developed in the paper, such as the SEPD and the SGLD. Section \ref{Prior} focuses on the derivation of the objective priors for the parameters of the models here considered. In Section \ref{Simu} we analyse the frequentist properties of the proposed prior using data simulated from regression models and time series models. Section \ref{Real} deals with real data, in particular we model electricity prices and a cancer dataset. Final discussion points and conclusions are presented in Section \ref{Concl}.

\section{Two-piece location-scale models}
\label{Prelim}

As described in \cite{RubSte14}, in the simple univariate location-scale model it is possible to induce skewness by the use of different scales on both sides of the model and using three different scalar parameters. Firstly, we introduce a general definition of two-piece location-scale models and then we describe different distributions of this family. The general two-piece location-scale density has the following form:
\begin{equation}
g(y|\mu,\sigma_1,\sigma_2,\alpha)=\frac{1}{\sigma_1}f\left( \frac{y-\mu}{2\alpha\sigma_1}\right) \one_{(-\infty,\mu)}(y) + \frac{1}{\sigma_2}f\left( \frac{y-\mu}{2(1-\alpha)\sigma_2}\right) \one_{(\mu,\infty)}(y),
\label{GenFun}
\end{equation}
where $f$ is an absolutely continuous distribution on $\mathbb{R}$, $\mu\in\mathbb{R}$ is the location parameter, $\sigma_1 \in \mathbb{R}^+$ and $\sigma_2 \in \mathbb{R}^+$ are the separate scale parameters and $\alpha \in (0,1)$ is the skewness parameter. In this paper, we follow \cite{RubSte14} and assume $f$ to be symmetric with a single mode at zero, which means that $\mu$ is the mode of the density in \eqref{GenFun}. Hereafter, we assume $\sigma_1=\sigma_2=\sigma$ and we focus on three particular two-piece location-scale models: the skewed Student-$t$ distribution (SST), the skewed exponential power distribution (SEPD) and the skewed generalized logistic distribution (SGLD). These distributions depend on an additional parameter $p$ which controls the behaviour of the tails. The SST is defined as follows. 

\begin{definition}[Skewed Student-$t$ distribution]
Assume $\mu \in \mathbb{R}$ the location parameter, $\sigma >0$  the scale parameter,  $\alpha \in (0,1)$ the skewness parameter and $p>0$  the tail parameter. We define the skewed Student-t distribution as
\begin{equation}
f_{\text{SST}}(y|\alpha,p,\mu,\sigma)=\left\{
\begin{array}{ll}
\dfrac{K(p)}{\sigma} \left[1+\dfrac{1}{p}\left(\dfrac{y-\mu}{2\alpha \sigma}\right)^2\right]^{-\frac{p+1}{2}}, & y \leq \mu,  \\
\dfrac{K(p)}{\sigma} \left[1+\dfrac{1}{p}\left(\dfrac{y-\mu}{2(1-\alpha) \sigma}\right)^2\right]^{-\frac{p+1}{2}}, & y > \mu, \label{SST}
\end{array}
\right.
\end{equation}
where 
 \begin{equation}
 K(p)=\frac{\Gamma[(p+1)/2]}{\sqrt{\pi p}\,\Gamma(p/2)} \notag
 \end{equation}
is a function depending on the tail parameter $p$.
\end{definition}

For a more detailed description of the properties of the SST distribution, see \cite{FerSteel98} and \cite{ZhuGal10}. The SST has some special cases: if $\alpha=1/2$, it is the usual Student-$t$ with $p$ degrees of freedom; if $p=1$, is the skewed Cauchy, while for  $p\rightarrow\infty$, it converges to the skewed normal distribution. \cite{LeiMarVilla16} have proposed a loss-based prior for the tail parameter $p$ of the SST distribution. Therefore, hereafter we will give limited attention to this distribution and we will focus on the remaining two distributions.

The first distribution of interest in our analysis accomodates heavy tails as well as  skewness and is defined as follows. 

\begin{definition}[Skewed exponential power distribution] 
Let us define $\mu \in \mathbb{R}$, the location parameter, $\sigma >0$, the scale parameter, $\alpha \in (0,1)$ the skewness parameter and $p>0$  the tail parameter.  The skewed exponential power distribution has the form
\begin{equation}
f_{\text{SEPD}}(y|\alpha,p,\mu,\sigma)=\left\{
\begin{array}{ll}
\frac{K(p)}{\sigma}\exp{\left\{-\frac{1}{p}\left|\frac{y-\mu}{2\alpha \sigma}\right|^{p}\right\}}, & y \leq \mu,  \\
\frac{K(p)}{\sigma}\exp{\left\{-\frac{1}{p}\left|\frac{y-\mu}{2(1-\alpha) \sigma}\right|^{p}\right\}}, & y > \mu, 
\label{SEPD}
\end{array}
\right.
\end{equation}
with normalizing constant
\begin{equation}
K(p)=\frac{1}{2p^{1/p}\Gamma\left(1+\frac{1}{p}\right)}. \notag
\end{equation}
\end{definition}

The SEPD has been studied in \cite{FerSteel98}, \cite{Kom07} and \cite{ZhuZinde09}. In detail, for $p=1$ the SEPD becomes a skewed Laplace distribution, and for $p=2$ is a skewed normal distribution. For values of $p\to \infty$, we have that the SEPD reduces to an uniform distribution. 

The second distribution we will study, is built on a Beta transformation of the logistic distribution (as described in \cite{Jones04}):
\begin{equation}
f(x)=p[S(x)]s(x), \label{BB}
\end{equation}
where $p(\cdot)$ is the probability density function of a Beta distribution with parameters $(p,p)$ and $S(x)$ and $s(x)$ are, respectively, the cumulative distribution function and the probability density function of the logistic distribution
\begin{equation}
s(x)=\frac{\exp{\left\{-\frac{x-\mu}{\sigma}\right\}}}{\sigma \left(1+\exp{\left\{-\frac{x-\mu}{\sigma}\right\}}\right)^2}, \quad x\in \mathbb{R}. \notag 
\end{equation}
Note that, $f(x)$ in \eqref{BB} is also known as the logistic distribution of the III type. Its skewed version is defined as follows.  

\begin{definition}[Skewed generalized logistic distribution]
Assume $\mu \in \mathbb{R}$, the location parameter, $\sigma >0$, the scale parameter, $\alpha \in (0,1)$ the skewness parameter and $p>0$  the tail parameter. We define the skewed generalized logistic distribution as
\begin{equation}
f_{\text{SGLD}}(y|\alpha,p,\mu,\sigma)=\left\{
\begin{array}{ll}
\frac{1}{\sigma B(p,p)}\frac{\left(\exp{\left\{-\frac{y-\mu}{2\alpha \sigma}\right\}}\right)^p}{\left(1+\exp{\left\{-\frac{y-\mu}{2\alpha \sigma}\right\}}\right)^{2p}}, & y \leq \mu,  \\
\frac{1}{\sigma B(p,p)}\frac{\left(\exp{\left\{-\frac{y-\mu}{2(1-\alpha)\sigma}\right\}}\right)^p}{\left(1+\exp{\left\{-\frac{y-\mu}{2(1-\alpha)\sigma}\right\}}\right)^{2p}}, & y > \mu, 
\label{SLD}
\end{array}
\right.
\end{equation}
where $B(p,p)$ is the beta function. 
\end{definition}

 
\section{The objective prior distribution}
\label{Prior}

Through Bayes theorem, we obtain the posterior, given data $\textbf{x}=(x_1,\dots,x_n)$, by combining the likelihood function and the prior. That is
\begin{equation}
\pi(\alpha,p,\mu,\sigma|\textbf{x}) \propto L(\textbf{x}|\alpha,p,\mu,\sigma) \pi(\alpha,p,\mu,\sigma),
\label{Pos}
\end{equation}
where $L(\textbf{x}|\alpha,p,\mu,\sigma)=\prod_{i=1}^n f(x_i|\alpha,p,\mu,\sigma)$ is the likelihood function, and $\pi(\alpha,p,\mu,\sigma)$ is the prior distributions for all the parameters of the two-piece location-scale. Assuming some degree of independence of prior knowledge about the parameters, the prior distribution can be factorized as 
\begin{equation}
\pi(\alpha,p,\mu,\sigma) \propto \pi(p|\alpha,\mu,\sigma) \pi(\mu,\sigma) \pi(\alpha).
\label{preq}
\end{equation}
In the next section we will show that for  the models under consideration, the prior on the parameter $p$ does not depend on $\alpha$, $\mu$ and $\sigma$. As such, we can write $\pi(p|\alpha,\mu,\sigma)=\pi(p)$.

\subsection{Loss-based prior for $p$}
\label{ObB}
The main focus of this paper is to make  inference on the parameter $p$.
Without loss of generality, $p$ is considered discrete taking values in $\mathbb{N}$.  This is motivated by the fact that seldom the amount of information about $p$ in the data is sufficient to discern between distributions that differ in $p$ less than one. For instance, this is a well known fact for the Student-$t$ distribution. 

\cite{VillaWalker15} introduced a method for specifying an objective prior for discrete parameters. Consider the general two-piece location scale distribution 

\begin{equation}
f_p^{\alpha,\mu,\sigma}(y)=\left\{
\begin{array}{ll}
\frac{1}{\sigma}f_p\left( \frac{y-\mu}{2\alpha\sigma}\right), & y \leq \mu,  \\
\frac{1}{\sigma}f_p\left( \frac{y-\mu}{2(1-\alpha)\sigma}\right), & y > \mu,
\label{GenTwoPiece}
\end{array}
\right.
\end{equation}
which corresponds to the SEPD if $f_p$ is the exponential power distribution and to the SGLD if $f_p$ coincides with the distribution displayed in equation \eqref{BB}. 

The density function $f_p^{\alpha,\mu,\sigma}(y)$ is characterised by the unknown discrete parameter $p$. The idea is to assign a \emph{worth} to each parameter value by objectively measuring what is lost if the value is removed, and it is the true one. The loss is evaluated by applying the well known result in \cite{Berk66} stating that, if a model is misspecified, the posterior distribution asymptotically accumulates on the model which is the nearest to the true one, in terms of the Kullback--Leibler divergence.
Therefore, the \emph{worth} of the parameter value $p$ is represented by the Kullback--Leibler divergence $D_{\text{KL}}\left(f_p^{\alpha,\mu,\sigma}\|f_{p'}^{\alpha,\mu,\sigma}\right)$, where $p^\prime\neq p$ is the parameter value that minimizes the divergence. To link the \emph{worth} of a parameter value to the prior mass, \cite{VillaWalker15} use the self-information loss function. This particular type of loss function measures the loss in information contained in a probability statement \citep{MerFed98}. As we now have, for each value of $p$, the loss in information measured in two different ways, we simply equate them obtaining the loss-based prior:
\begin{equation}
\pi(p) \propto \exp{\biggl\{\min_{p' \ne p} D_{\text{KL}}\left(f_p^{\alpha,\mu,\sigma}\|f_{p'}^{\alpha,\mu,\sigma}\right)\biggr\}} -1,\label{op}
\end{equation}
where 
\begin{equation}
D_{\text{KL}}\left(f_p^{\alpha,\mu,\sigma}\|f_{p'}^{\alpha,\mu,\sigma}\right)=\int f_p^{\alpha,\mu,\sigma}(y) \log{\left\{\frac{f_p^{\alpha,\mu,\sigma}(y)}{f_{p'}^{\alpha,\mu,\sigma}(y)}\right\}} \, dy 
\notag
\end{equation}
is the Kullback--Leibler divergence. 

Following \cite{LeiMarVilla16}, we introduce a theorem (which proof is in Appendix \ref{AppA}) to study the form of the Kullback--Leibler divergence and consequently of the loss-based prior for the tail parameter $p$. 

\begin{theorem} \label{ThKL}
Let $f_p^{\alpha,\mu,\sigma}$ be the density function displayed in equation \eqref{GenTwoPiece} which could be either the SEPD or the SGLD. Then, 
\begin{equation}
D_{\text{KL}}\left(f_p^{\alpha,\mu,\sigma}\|f_{p'}^{\alpha,\mu,\sigma}\right)=D_{\text{KL}}\left(f_{p}^{\alpha=0.5, \mu=0,\sigma=1}\|f_{p'}^{\alpha=0.5, \mu=0,\sigma=1}\right) \notag 
\end{equation} 
for every $p \ge 1$.
\end{theorem}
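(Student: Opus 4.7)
The plan is to split the Kullback--Leibler integral at $y = \mu$, apply a linear change of variables on each half-line, and then invoke symmetry of the base density $f_p$ about zero to eliminate all dependence on $\alpha$, $\mu$, $\sigma$. First I would observe that on $(-\infty,\mu]$ only the first branch of \eqref{GenTwoPiece} contributes and on $(\mu,\infty)$ only the second, so the common factor $1/\sigma$ cancels inside the logarithm of the density ratio. This leaves, on each half-line, an integrand of the form $f_p(z)\log\{f_p(z)/f_{p'}(z)\}$ (times the relevant $1/\sigma$ prefactor) where $z=(y-\mu)/(2\alpha\sigma)$ on the left and $z=(y-\mu)/(2(1-\alpha)\sigma)$ on the right.

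Next I would substitute $u=(y-\mu)/(2\alpha\sigma)$ on the left piece and $v=(y-\mu)/(2(1-\alpha)\sigma)$ on the right piece. The Jacobians $2\alpha\sigma\,du$ and $2(1-\alpha)\sigma\,dv$ combine with the prefactor $1/\sigma$ to give
\begin{equation}
D_{\text{KL}}\bigl(f_p^{\alpha,\mu,\sigma}\|f_{p'}^{\alpha,\mu,\sigma}\bigr)
= 2\alpha \int_{-\infty}^{0} f_p(u)\log\frac{f_p(u)}{f_{p'}(u)}\,du
+ 2(1-\alpha)\int_{0}^{\infty} f_p(v)\log\frac{f_p(v)}{f_{p'}(v)}\,dv. \notag
\end{equation}
At this point the $\mu$ and $\sigma$ dependence has disappeared entirely, and only $\alpha$ remains as an explicit coefficient.

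To handle $\alpha$, I would use the symmetry of $f_p$ about $0$, which must be checked case by case for the SEPD and SGLD. For the SEPD this is transparent since $\exp\{-|x|^p/p\}$ is even. For the SGLD, writing $(1+e^{y})^{2p}=e^{2py}(1+e^{-y})^{2p}$ shows that $e^{-py}/(1+e^{-y})^{2p}$ equals $e^{py}/(1+e^{y})^{2p}$, so the base density is even as well. Consequently the two integrals above are equal; call their common value $I$. Then $D_{\text{KL}}=(2\alpha+2(1-\alpha))I=2I$, which is manifestly independent of $\alpha$, $\mu$, $\sigma$. Evaluating the same expression at $\alpha=1/2,\mu=0,\sigma=1$ yields the same $2I$, proving the claimed identity.

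The calculation is essentially routine; the only ``obstacle'' is verifying the evenness of $f_p$ for the SGLD, which is a brief algebraic check. The condition $p\ge 1$ in the statement is not needed for the argument itself but reflects the discrete support $\mathbb{N}$ assumed for $p$ in Section \ref{ObB}.
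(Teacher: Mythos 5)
Your proof is correct and follows essentially the same route as the paper's: split the divergence at $y=\mu$, change variables on each half-line to obtain $2\alpha$ and $2(1-\alpha)$ times the standardized half-line divergences, and use the symmetry of $f_p$ about zero to conclude the two halves are equal so the $\alpha$-dependence cancels. Your explicit verification of evenness for the SGLD base density is a small bonus the paper leaves implicit.
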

In other words, Theorem \ref{ThKL} shows that the loss-based prior distribution for the tail parameter $p$ does not depend from the skewness parameter $\alpha$, the location $\mu$ and the scale $\sigma$. Hence, the prior can be written as $\pi(p|\alpha,\mu,\sigma) = \pi(p)$. 

The following theorem derives the  closed form of the Kullback--Leibler divergence for the SEPD. Its proofs, which can be foundin Appendix \ref{AppA}, leverages on the result in Theorem \ref{ThKL}.
\begin{theorem} \label{KLSEPD}
Let $f_p^{\alpha,\mu,\sigma}$ be the SEPD, with skewness parameter $\alpha\in(0,1)$, location parameter $\mu \in \real$, scale parameter $\sigma \in \real_+$ and tail parameter $p \in \{1,2,\dots\}$ as described in equation \eqref{SEPD}. Then, the Kullback--Leibler divergence between two SEPDs that differ in the tail parameter only is:
\begin{equation}
D_{\text{KL}}\left(f_p^{\alpha,\mu,\sigma}\|f_{p'}^{\alpha,\mu,\sigma}\right)= \log{K(p)}-\log{K(p')}-p^{-1}+\frac{p^{\frac{p'}{p}}}{p'}\frac{\Gamma \left(\frac{p'+1}{p}\right)}{\Gamma \left(\frac{1}{p}\right)}.\notag
\end{equation}
\end{theorem}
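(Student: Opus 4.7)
The plan is to invoke Theorem \ref{ThKL} at the outset to collapse the problem to the symmetric, standardised case $\alpha=1/2$, $\mu=0$, $\sigma=1$, at which the SEPD becomes the familiar exponential power density
\[
f_p(y)=K(p)\exp\!\left\{-\tfrac{1}{p}|y|^p\right\},\qquad y\in\mathbb{R},
\]
with $K(p)=\{2p^{1/p}\Gamma(1+1/p)\}^{-1}$. This eliminates the piecewise structure and makes the integrand symmetric in $y$, which is the only simplification needed to push the computation through in closed form.

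Next, I would expand the log-ratio inside the KL integral to obtain
\[
D_{\text{KL}}(f_p\|f_{p'})=\log K(p)-\log K(p')-\tfrac{1}{p}\,\mathbb{E}_p\!\left[|Y|^{p}\right]+\tfrac{1}{p'}\,\mathbb{E}_p\!\left[|Y|^{p'}\right],
\]
so the whole task reduces to evaluating the two absolute moments of $Y\sim f_p$. By symmetry, $\mathbb{E}_p[|Y|^k]=2K(p)\int_0^\infty y^k e^{-y^p/p}\,dy$, and the substitution $t=y^p/p$ (so $y=(pt)^{1/p}$ and $dy=p^{1/p-1}t^{1/p-1}\,dt$) converts this into a Gamma integral, giving
\[
\mathbb{E}_p[|Y|^k]=\frac{p^{k/p}\,\Gamma\!\left(\tfrac{k+1}{p}\right)}{\Gamma\!\left(\tfrac{1}{p}\right)}
\]
after substituting $K(p)$ and using $\Gamma(1+1/p)=\Gamma(1/p)/p$.

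Finally I would specialise this formula in the two relevant cases. For $k=p$, the ratio $\Gamma((p+1)/p)/\Gamma(1/p)=\Gamma(1+1/p)/\Gamma(1/p)=1/p$ yields the clean identity $\mathbb{E}_p[|Y|^p]=1$, which produces the $-1/p$ term in the statement. For $k=p'$, the general expression directly delivers the final term $p^{p'/p}\,\Gamma((p'+1)/p)/\bigl(p'\,\Gamma(1/p)\bigr)$. Assembling these pieces gives exactly the formula claimed.

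The only mildly delicate point is the bookkeeping with $K(p)$ and the identity $\Gamma(1+1/p)=\Gamma(1/p)/p$, which has to be applied correctly so that the $p^{1/p}$ factor inside $K(p)$ cancels against the $p^{(k+1)/p}$ produced by the substitution; everything else is routine once Theorem \ref{ThKL} is applied. I do not anticipate any genuine obstacle, since the symmetric exponential power density has tractable moments and the KL reduces to two Gamma evaluations.
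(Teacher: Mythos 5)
Your proposal is correct and follows essentially the same route as the paper: reduce to the standardised symmetric case via Theorem \ref{ThKL}, expand the log-ratio, and evaluate the resulting integrals as Gamma functions (the paper cites Gradshteyn--Ryzhik 3.478.1, which is exactly your substitution $t=y^p/p$). Packaging the computation as absolute moments $\mathbb{E}_p[|Y|^k]$ is only a cosmetic difference, and all your identities, including $\mathbb{E}_p[|Y|^p]=1$, check out.
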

By applying the result of Theorem \ref{KLSEPD} into equation \eqref{op}, we can then derive the loss-based prior for the SEPD. From Table \ref{TKL30} we see that the minimum Kullback--Leibler divergence is attained for $p^\prime=p+1$ when $p\leq3$ and for $p^\prime=p-1$ for $p>3$. As such, the prior on $p$ is:
$$
\pi(p)\propto
\begin{cases}
\exp\left[\log K(p) - \log K(p+1) - p^{-1} + \dfrac{p^{\frac{p+1}{p}}}{p+1}\dfrac{\Gamma\left(\dfrac{p+2}{p}\right)}{\Gamma\left(\dfrac{1}{p}\right)}\right]-1, & \mbox{for } p\leq3,\\
\exp\left[\log K(p) - \log K(p-1) - p^{-1} + \dfrac{p^{\frac{p-1}{p}}}{p-1}\dfrac{1}{\Gamma\left(\dfrac{1}{p}\right)}\right]-1, & \mbox{for } p>3. \label{lbprior_sepd}
\end{cases}
$$

%
We have numerically verified that the above prior for $p$ is proper for $p=\{1,2,3,\ldots,\infty\}$, therefore yielding a proper posterior.

To derive the loss-based prior for the parameter $p$ of the SGLD, we consider the following Theorem \ref{KLSLD} (which proof is in the Appendix \ref{AppA}), giving the expression of the Kullback--Leibler divergence between two SGLDs.

\begin{theorem} \label{KLSLD}
Let $f_p^{\alpha,\mu,\sigma}$ be the SGLD with skewness parameter $\alpha\in(0,1)$, location parameter $\mu \in \real$, scale parameter $\sigma \in \real_+$ and tail parameter $p \in \{1,2,\dots\}$, respectively as described in equation \ref{SLD}. Then, the Kullback--Leibler divergence between two SGLDs that differ in the tail parameter only is:
\begin{equation}
D_{\text{KL}}\left(f_p^{\alpha,\mu,\sigma}\|f_{p'}^{\alpha,\mu,\sigma}\right)=\log{\left[\frac{B(p',p')}{B(p,p)} \right]} +2(p-p')[\psi(p)-\psi(2p)],\label{KL_Lo}
\end{equation}
where $\psi(p)$ is the digamma function.
\end{theorem}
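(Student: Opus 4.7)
The plan is to reduce to the standard case via Theorem \ref{ThKL}, recognize that the resulting density is the push-forward of a Beta$(p,p)$ distribution through the logit, and exploit standard moments of $\log U$ for a Beta random variable. Concretely, by Theorem \ref{ThKL} I may set $\alpha=1/2$, $\mu=0$, $\sigma=1$ without loss of generality. With these choices the two-piece structure collapses (the two branches coincide), and the SGLD density simplifies to
\begin{equation*}
f_p(y)=\frac{1}{B(p,p)}\frac{e^{-py}}{(1+e^{-y})^{2p}},\qquad y\in\real.
\end{equation*}

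Next I would write out the log-ratio of two such densities,
\begin{equation*}
\log\frac{f_p(y)}{f_{p'}(y)}=\log\frac{B(p',p')}{B(p,p)}-(p-p')\,y-2(p-p')\log\!\bigl(1+e^{-y}\bigr),
\end{equation*}
and integrate against $f_p$. Since $f_p$ is symmetric about $0$ (the density is invariant under $y\mapsto -y$, as a short check confirms using $(1+e^{y})^{2p}=e^{2py}(1+e^{-y})^{2p}$), the first-moment term contributes zero. Thus the whole computation reduces to evaluating $E_p[\log(1+e^{-Y})]$.

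For that expectation I would use the substitution $U=1/(1+e^{-Y})$, i.e.\ the logit map. A routine change of variables shows $f_p(y)\,dy = u^{p-1}(1-u)^{p-1}/B(p,p)\,du$, so $U\sim\text{Beta}(p,p)$; moreover $\log(1+e^{-y})=-\log u$. Using the well-known identity $E[\log U]=\psi(p)-\psi(2p)$ for a Beta$(p,p)$ variable then yields
\begin{equation*}
E_p\bigl[\log(1+e^{-Y})\bigr]=\psi(2p)-\psi(p).
\end{equation*}
Plugging this back, $D_{\text{KL}}(f_p\|f_{p'})=\log[B(p',p')/B(p,p)]+2(p-p')[\psi(p)-\psi(2p)]$, which is \eqref{KL_Lo}.

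I do not expect serious obstacles: Theorem \ref{ThKL} does the heavy lifting of removing the $(\alpha,\mu,\sigma)$ dependence, and the only non-trivial piece is the evaluation of $E_p[\log(1+e^{-Y})]$. The key insight that makes this step clean is recognizing the logit transform $U=1/(1+e^{-Y})$ that converts $Y$ into a Beta$(p,p)$ variable (which is precisely the construction in \eqref{BB}); once that is observed, the digamma identity for $E[\log U]$ finishes the calculation in one line.
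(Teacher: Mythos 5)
Your proof is correct and follows essentially the same route as the paper's: reduction to $\alpha=1/2$, $\mu=0$, $\sigma=1$ via Theorem \ref{ThKL}, the logistic-CDF change of variables sending $Y$ to a $\mathcal{B}eta(p,p)$ variable, and the digamma identity $E[\log U]=\psi(p)-\psi(2p)$ (the paper cites the equivalent integral 4.253.1 of Gradshteyn--Ryzhik). The only cosmetic difference is that you eliminate the linear-in-$y$ term using the symmetry of $f_p$ before substituting, whereas the paper substitutes first and then uses the symmetry $\int_0^1 t^{p-1}(1-t)^{p-1}\log t\,dt=\int_0^1 t^{p-1}(1-t)^{p-1}\log(1-t)\,dt$; these are equivalent.
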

From Table \ref{TKL_LD30} we see that the Kullback--Leibler divergence between two SGLD is minimised for $p^\prime=p+1$, and thus the loss-based prior is as follows:
\begin{equation}
\pi(p) \propto \frac{p}{2(2p+1)} \exp{\left\{2\left[\psi(2p)-\psi(p)\right]\right\}}-1,\label{OBSLD}
\end{equation}
which is proper, as we have numerically verified.

\subsection{Non-informative prior for the parameters $\alpha$, $\mu$ and $\sigma$.}
\label{PrAMS}
In line with the minimally informative focus of the paper, we have selected objective priors for the other parameters of the considered distributions. That is, we have considered Jeffreys priors for $\alpha$, $\mu$ and $\sigma$. As mentioned at the beginning of Section 3, we assume that the prior information on the true value of the parameters is independent. As such, we can consider, not only $\pi(\alpha)$ on its own, but we can also factorise the prior of the location and the scale parameters; that is $\pi(\mu,\sigma)=\pi(\mu)\pi(\sigma)$. The Jeffreys prior for $\mu$ and $\sigma$ is then proportional to $1/\sigma$, which is obtained by considering the Jeffreys prior for a location parameter, $\pi(\mu)\propto1$, and the Jeffreys prior for a scale parameter, $\pi(\sigma)\propto 1/\sigma$. Both these priors are extensively discussed in \cite{Jeff61}. It is worthwhile to note that the above considerations recover the well-known reference prior for the pair $(\mu,\sigma)$ \citep{BerBerSun09}.

Finally, the Jeffreys prior for the skewness parameter $\alpha$ has been introduced in \cite{RubSte14}, and it shows to be a Beta distribution with both parameters equal to $1/2$. That is $\pi(\alpha)\sim \mbox{Be}(1/2,1/2)$.

\section{Simulation studies}
\label{Simu}
It is important to analyse the performances of objective priors by studying the frequentist properties of the posterior distributions they yield to. As such, the aim of this section is to present simulation studies concerning the objective priors for $p$, as defined in Section 3, for the considered two-piece location-scale models discussed in this work. In particular, we study time series where the residual error terms follow a SEPD and regression models with error terms that follow a SGLD. 

\subsection{SEPD simulation study}

In this simulation exercise, we study an autoregressive (AR) model, where we assume the lag order equal to 1. Therefore, the AR model has the following form:
\begin{equation}
y_t =\phi_1 y_{t-1} + \varepsilon_t, \quad t=1,\dots,T, \label{AR}
\end{equation}
where we assume that the residual errors  $\varepsilon_t$ follow a $\text{SEPD}(\alpha,p,\mu,\sigma)$ with $p=1,\dots, 20$, $\alpha\in\{0.3,0.5,0.8\}$, $\mu = 0$ and $\sigma=1$. The parameter $\phi_1$ is set equal to $0.5$.  Finally, we consider sample sizes of $T=100$ and $T=250$. The analysis has been carried by assuming the loss-based prior on $p$ as defined in Section 3.1. The prior for the remaining three parameters of the SEPD, has been fixed as explained in Section 3.2.  For the parameter $\phi_1$, we assume a Zellner prior \citep{Zellner86} with $g=T$, that is $N(0,T(\sum_{i=1}^{T-1} y_i^2)^{-1})$.
 For each of the above scenarios, we have generated 250 random samples, as described in the Appendix \ref{App_sampling}, and computed the frequentist coverage of the 95\% posterior credible interval for $p$, and the relative square root of the mean squared error $\sqrt{\mbox{MSE}(p)}/p$. ~The coverage measures the frequency of which the true parameter value for $p$ is included in the 95\% credible interval of the posterior distribution of the parameter. Ideally, this value should be close to 0.95. The MSE allows to have a measure of the accuracy of the estimate, intended as the posterior mean for $p$.
\begin{figure}[h!]
\centering
\begin{tabular}{cc}
	{\includegraphics[width=6.75cm]{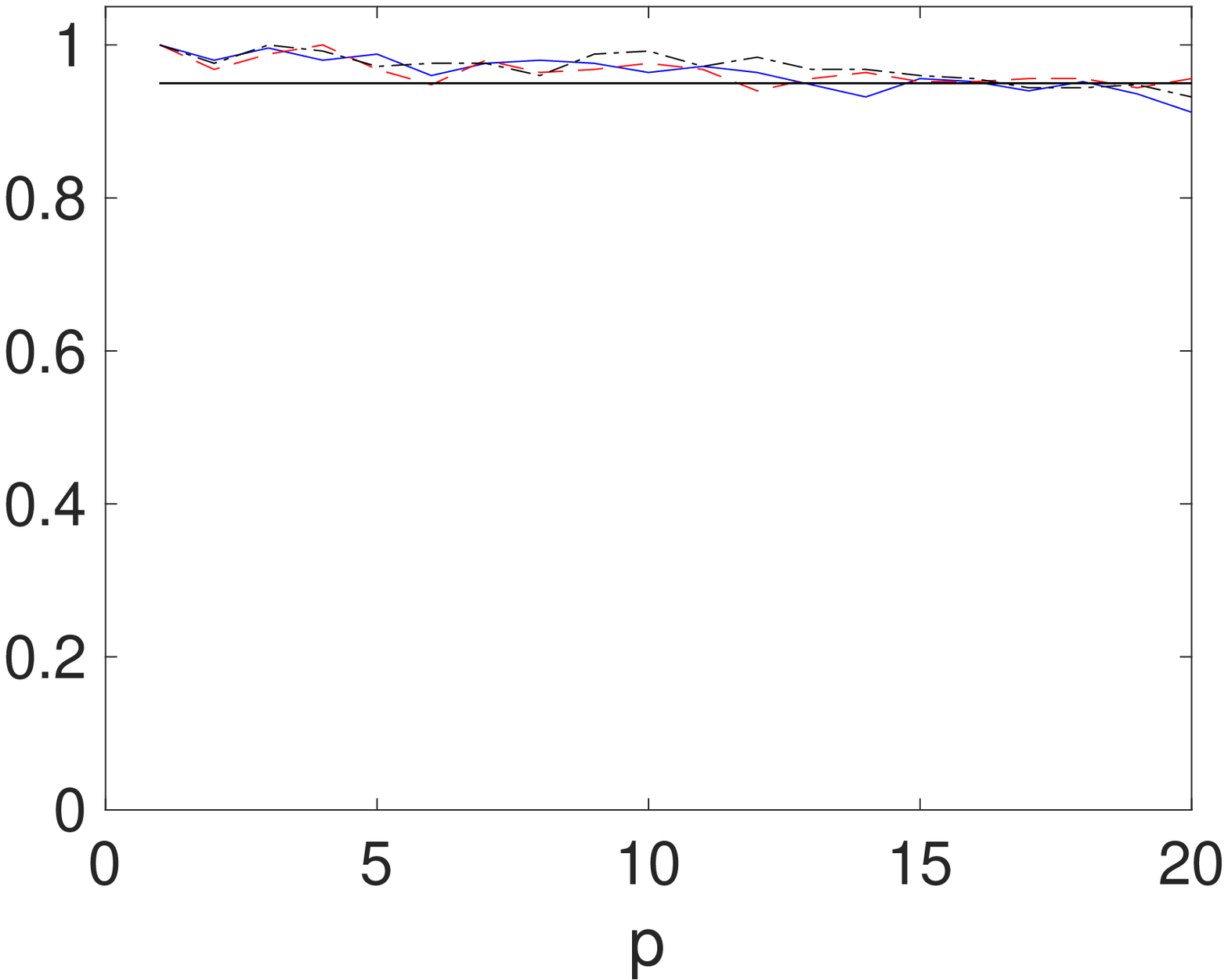}} &
	{\includegraphics[width=6.75cm]{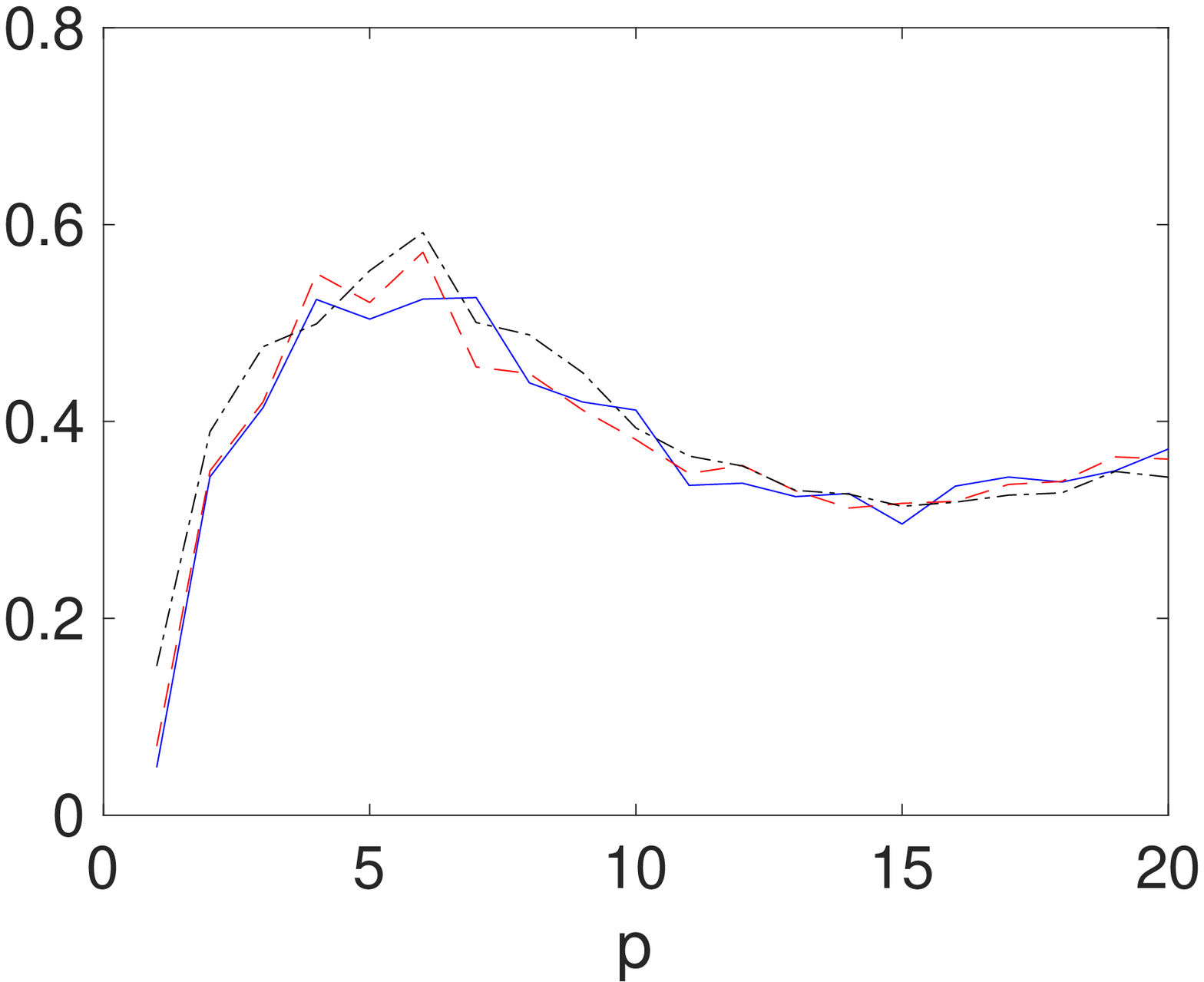}} \\
	{\includegraphics[width=6.75cm]{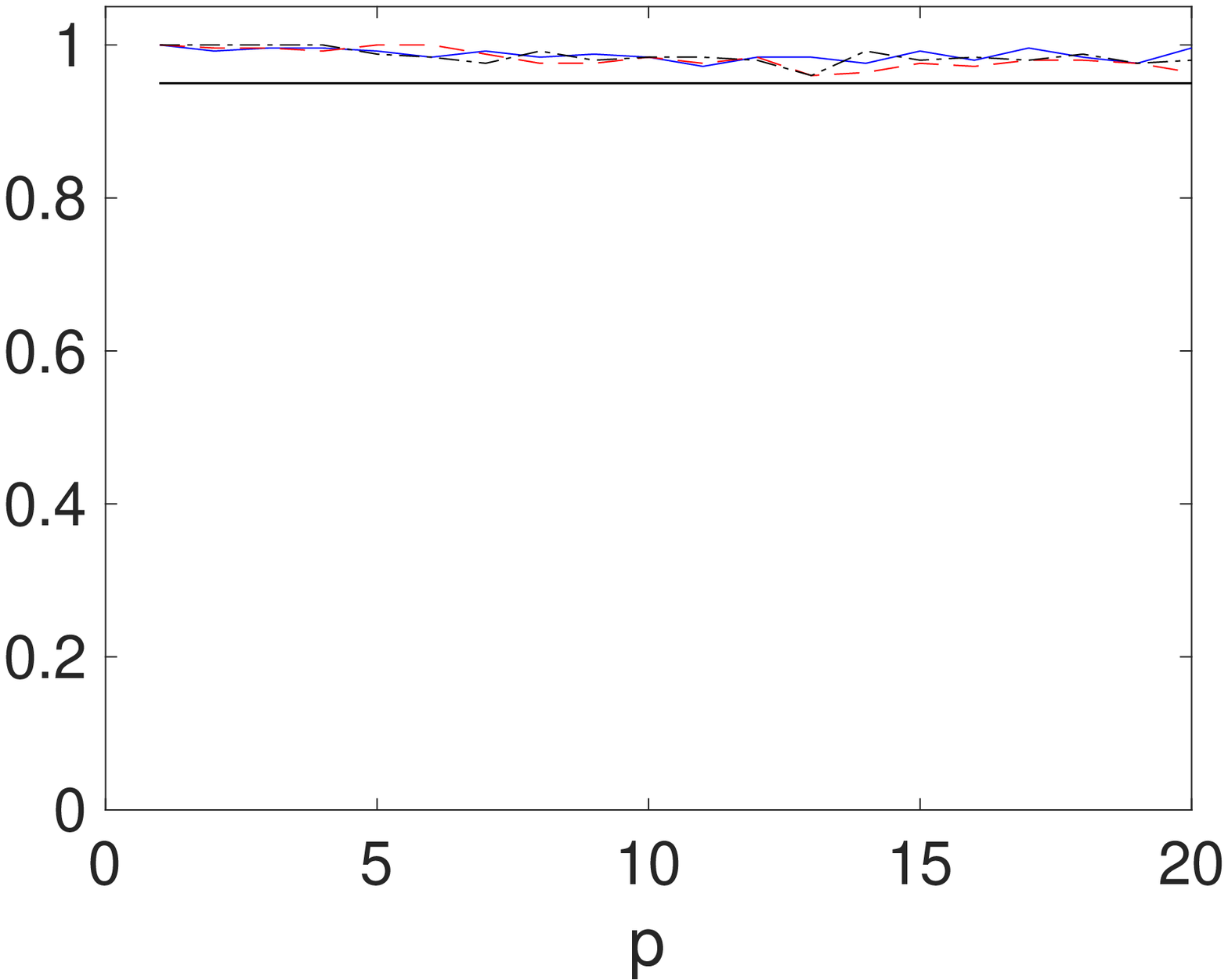}} &
	{\includegraphics[width=6.75cm]{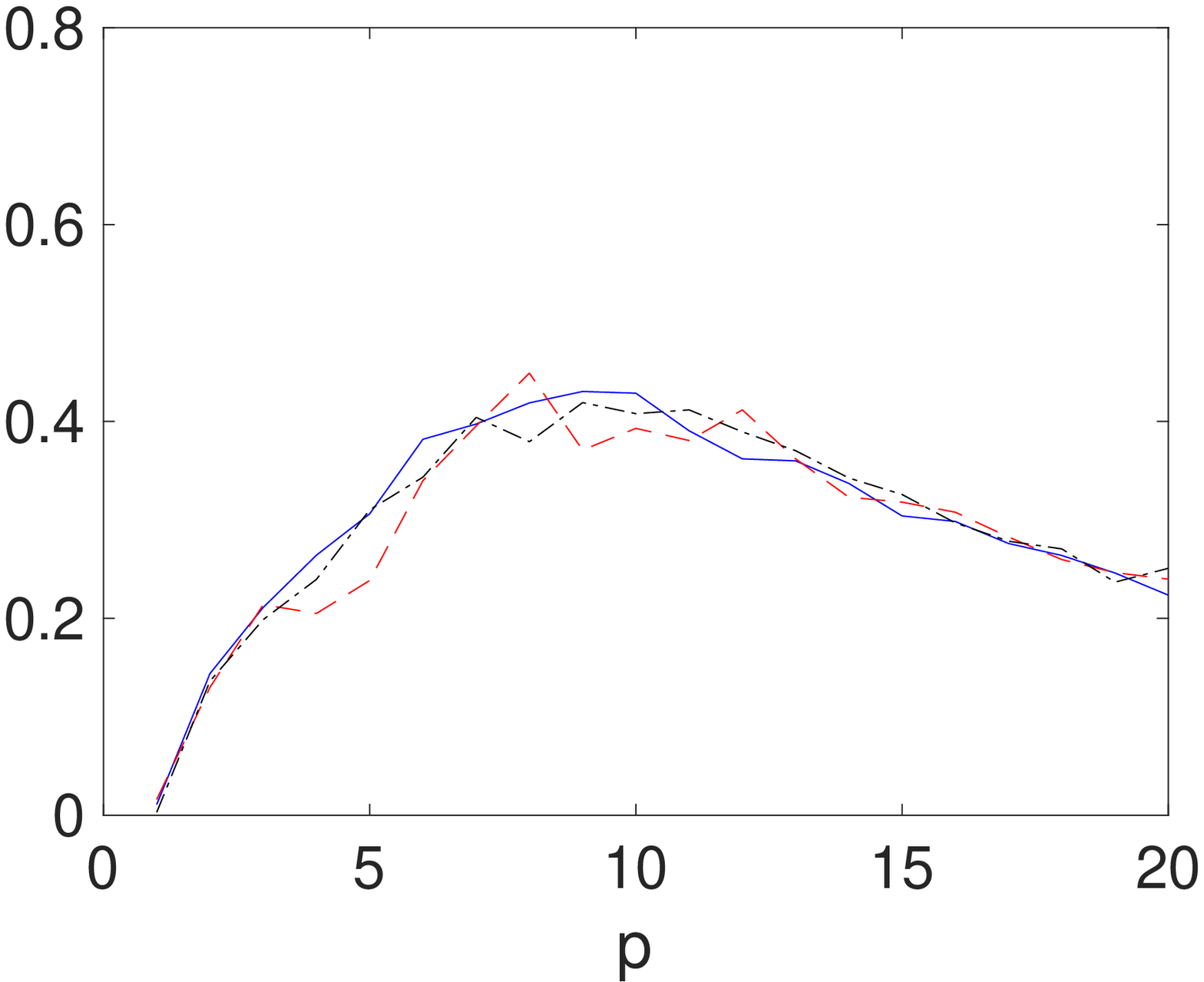}} \\
\end{tabular}
   	\caption{Frequentist coverage of the $95\%$ posterior credible interval for $p$ (left) and square root of relative mean squared error of the estimator of $p$ (right) for the SEPD. The simulations are for $\alpha=0.2$ (blue continuous line), $\alpha=0.5$ (red dashed line) and $\alpha=0.8$ (black dotted line), and for $T=100$ (top), $T=250$ (bottom).}
\label{FigCov_SEPD}
\end{figure}

As the yielded posterior distribution for the parameters is not analytically tractable, it is necessary to adopt Markov Chain Monte Carlo (MCMC) methods.  In particular, we have implemented a  Metropolis within Gibbs sampler. For each of the above $250$ samples, we have run $20000$ iterations of the MCMC algorithm and discarded the first $5000$ iterations as burn-in period. The results of the frequentist analysis of the posterior of $p$ are plotted in Figure \ref{FigCov_SEPD}. Examining the coverage, we note that the samples with $T=100$ have a frequency closer to the nominal value (i.e. 95\%) compared to the samples with $T=250$; this is more obvious for relatively large values of $p$. The MSE behaves in line with other frequentist studies for tail parameters (such as for the Student-$t$ and the skewed Student-$t$), with a smaller index value for larger sample size (as expected). Finally, we note that the effect of $\alpha$ on the frequentist performances is negligible.

To have a feeling of the complete inferential procedure, we show how all the parameters of a model are estimated. {In particular, we consider an autoregressive model with one lag, $\phi_1=-0.5$. The error terms are assumed to have an SEPD with  $\sigma=1$, $\alpha=0.23$ and $p=9$. We have drawn a sample of size $T=300$ from the model and implemented the MCMC procedure described above. In Figure \ref{FigCov_AEPD_special} we show the posterior chain and histogram for parameters $\alpha$, $\phi_1$, $p$ and $\sigma$. The corresponding posterior mean, median and 95\% credible interval are reported in Table \ref{table_AEPD_spec}. We note that the true parameter values are well contained in the corresponding posterior credible interval.
\begin{figure}[h!]
\centering
\begin{tabular}{cc}
	{\includegraphics[width=6.75cm]{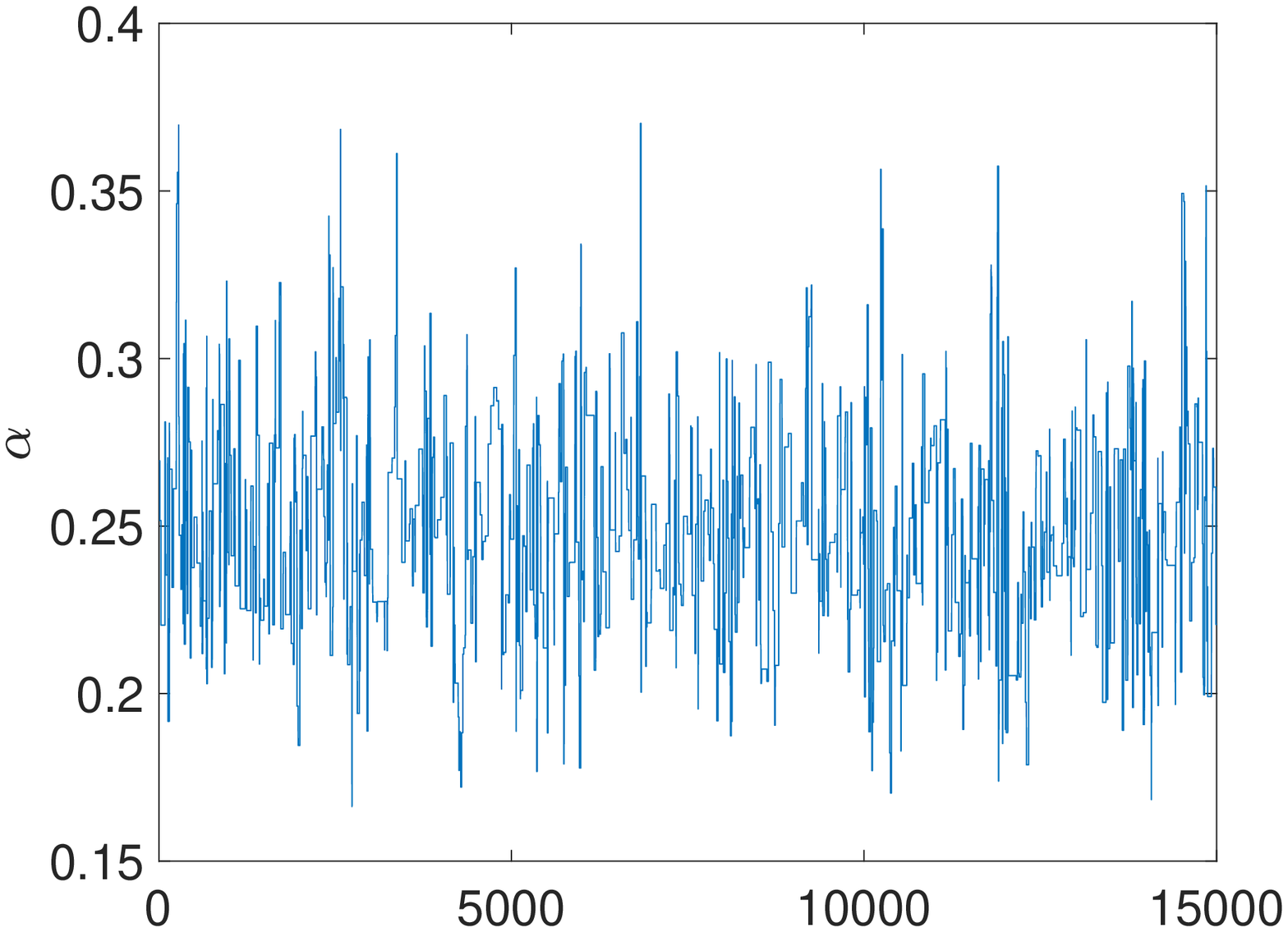}} &
	{\includegraphics[width=6.75cm]{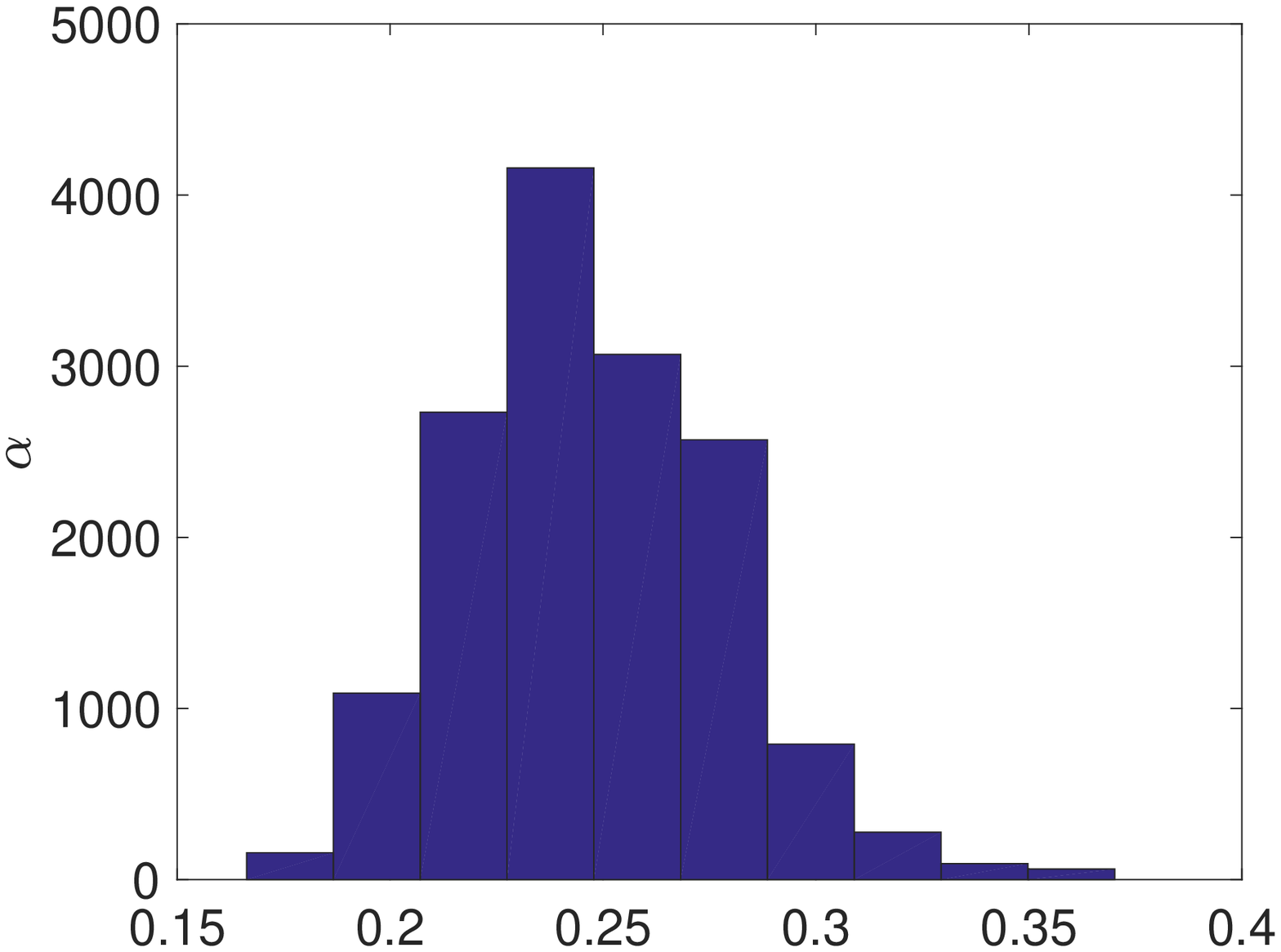}} \\
	{\includegraphics[width=6.75cm]{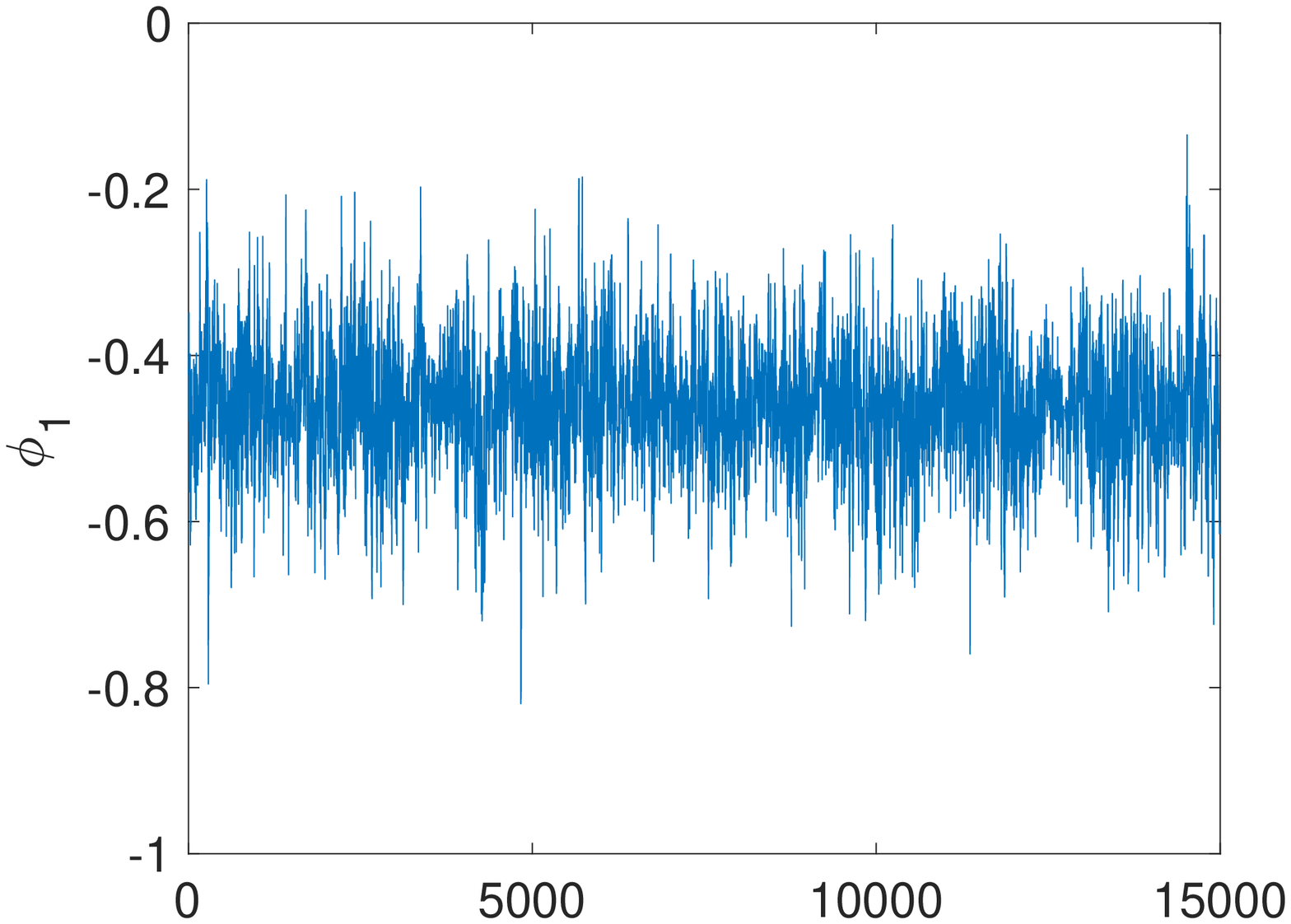}} &
	{\includegraphics[width=6.75cm]{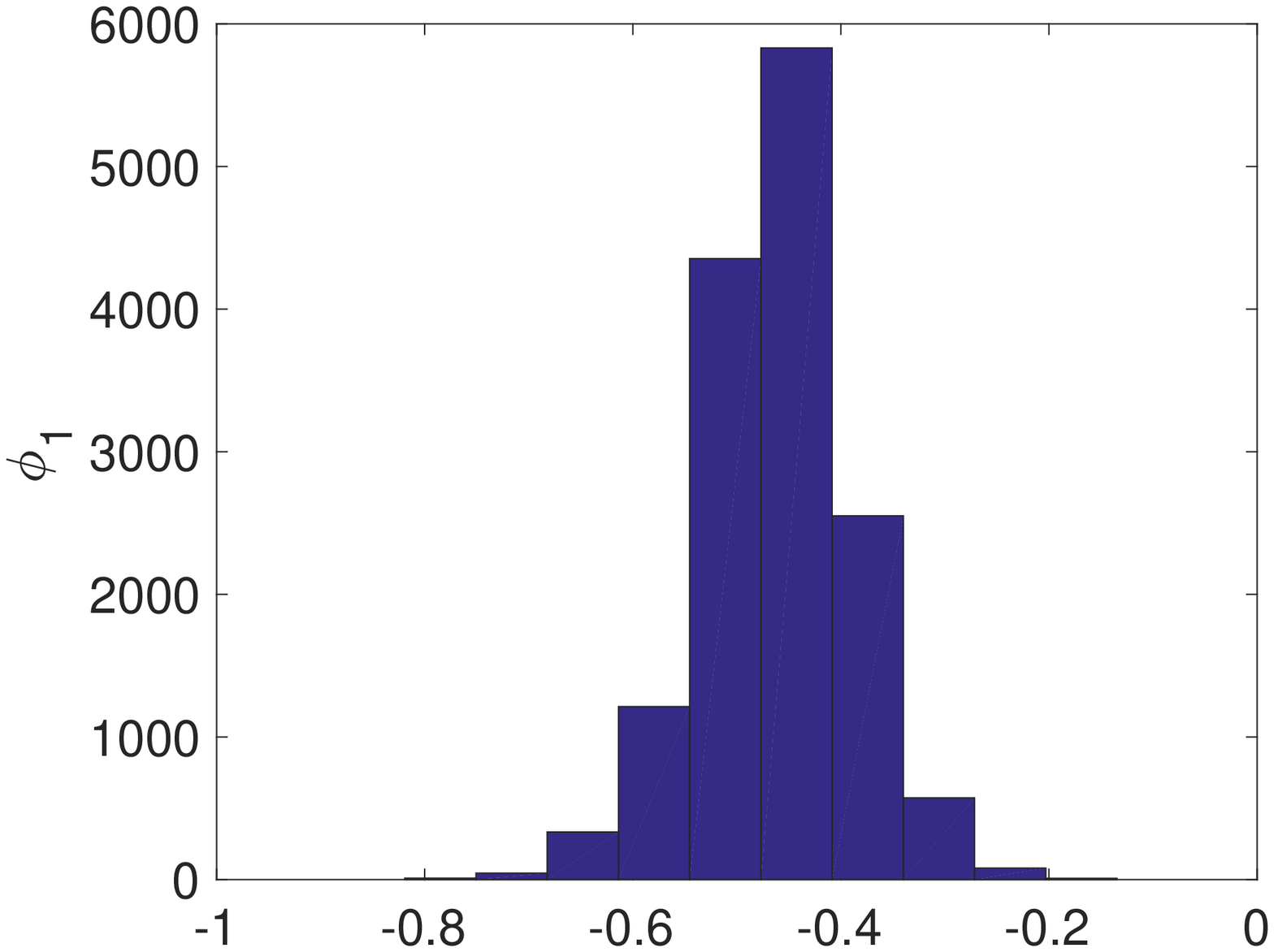}} \\
	{\includegraphics[width=6.75cm]{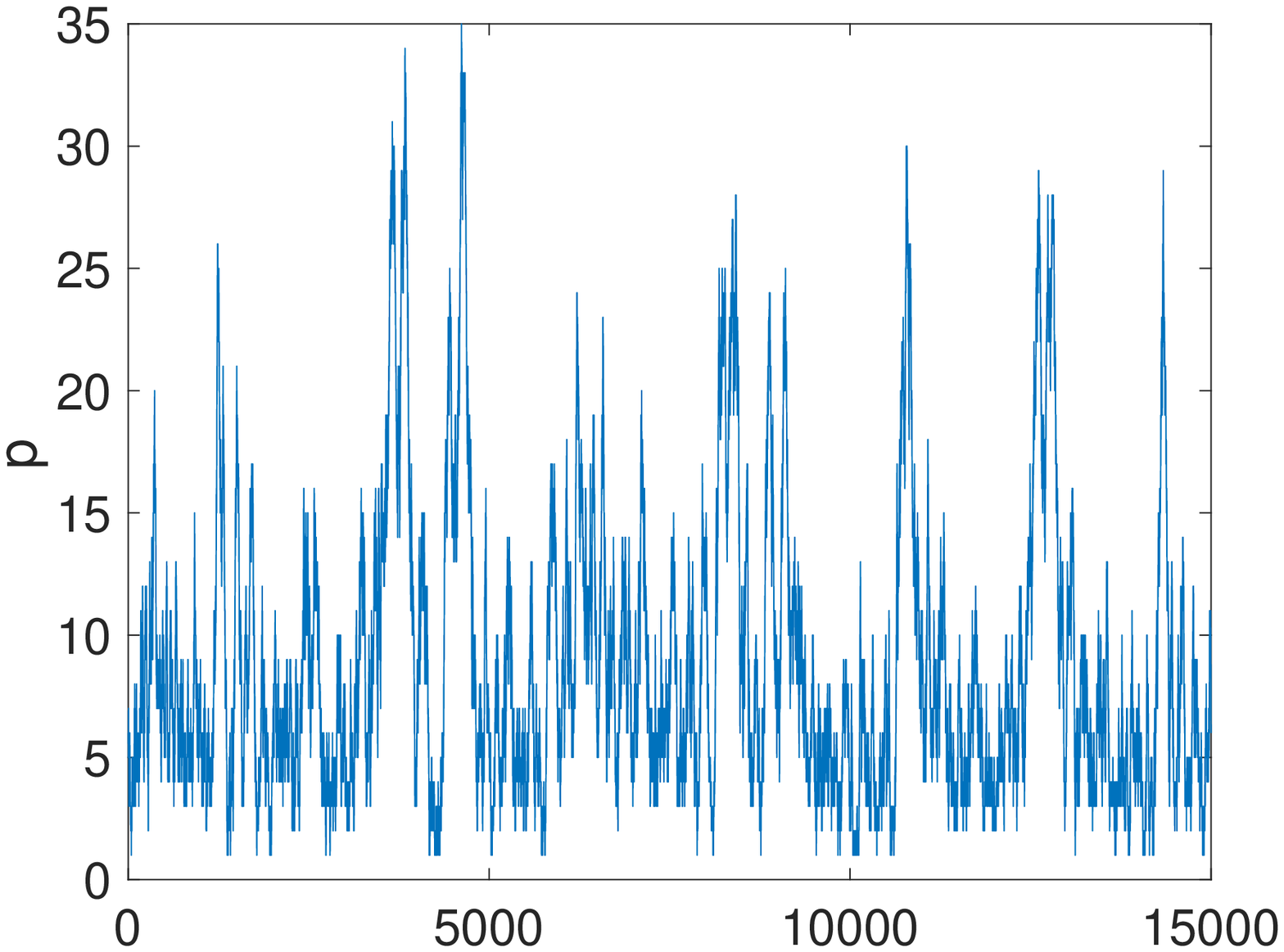}} &
	{\includegraphics[width=6.75cm]{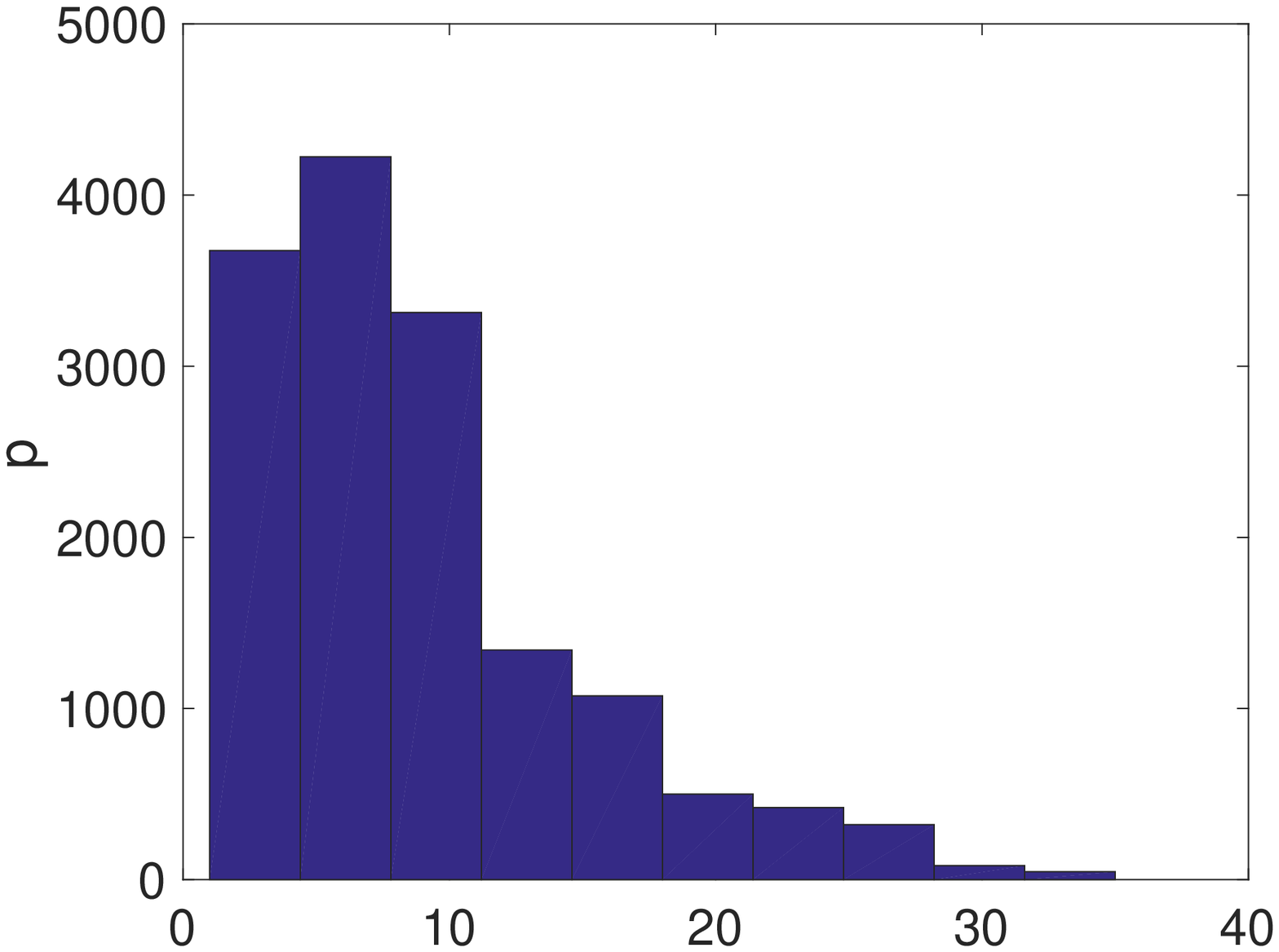}} \\
	{\includegraphics[width=6.75cm]{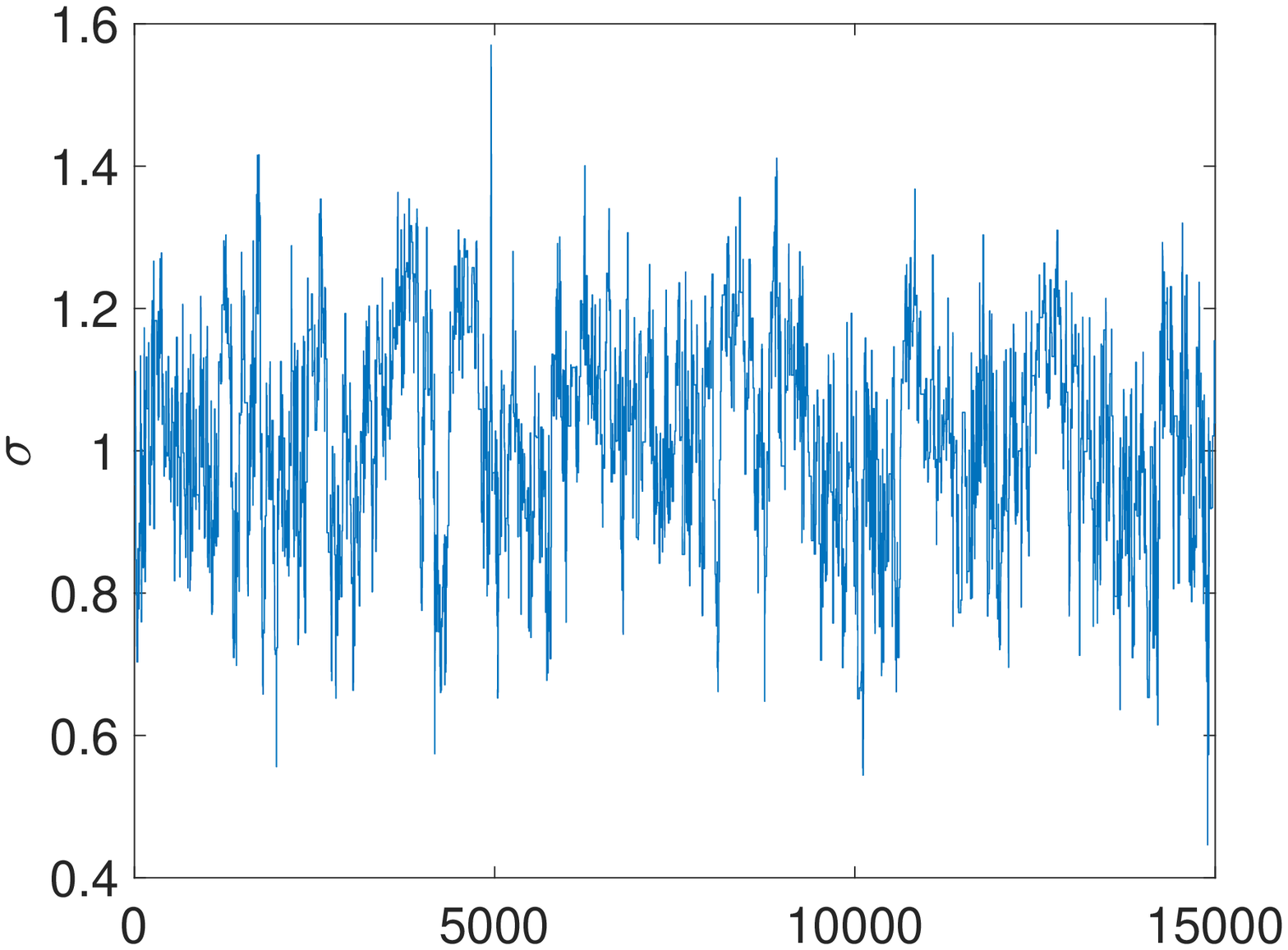}} &
	{\includegraphics[width=6.75cm]{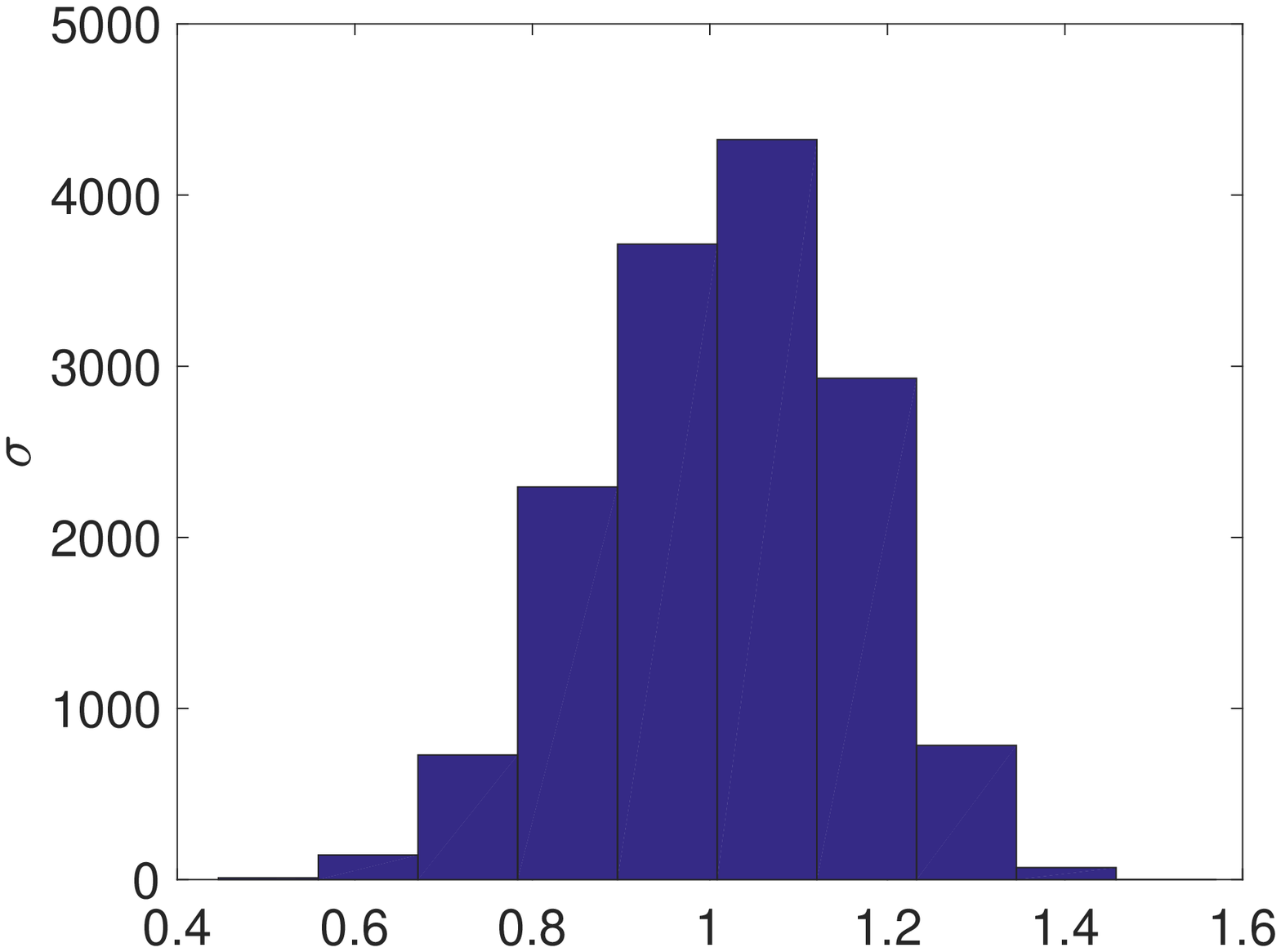}} \\
\end{tabular}
   	\caption{{Sample chains (left panels) and histograms of the posterior distributions (right panels) of the parameters for the simulated data from the SEPD with $\alpha = 0.23$, $\phi_1 = -0.5$, $p=9$, $\sigma = 1$ and $T=300$.}}
\label{FigCov_AEPD_special}
\end{figure}

\begin{table}[h!]
\centering
\begin{tabular}{cccc}
\hline
Parameter & Mean & Median & $95\%$ C.I. \\
\hline
$\alpha$ & 0.2476 & 0.2456 & (0.1917, 0.3125) \\
$\phi_1$ & -0.4612 & -0.4614 & (-0.6147, -0.3187) \\
$p$ & 8.94 & 8 & (2, 25) \\
$\sigma$ & 1.0179 & 1.0232 & (0.7104,1.2751) \\
\hline
\end{tabular}
\caption{{Summary statistics of the posterior distributions for the parameters of the simulated data from an SEPD with $\alpha = 0.23$, $\phi_1 = -0.5$, $p=9$, $\sigma =1$ and $T=300$.}}
\label{table_AEPD_spec}
\end{table}

\subsection{SGLD simulation study}
To study the performance of the loss-based prior for the tail parameter $p$ of the SGLD, as anticipated, we consider a linear regression model where the error terms have the above distribution. That is,
\begin{equation}
y_i =  \beta_0 + \beta_1 x_{i} + \varepsilon_i, \qquad i=1,\ldots,n, \label{Reg_SGLD}
\end{equation}
where, for the purpose of this simulation, we have set $\beta_0 = 1.5$, $\beta_1 = -1$ and $\varepsilon_i\sim SGLD(\alpha,p,\mu,\sigma)$. We select 250 random samples from the above model \eqref{Reg_SGLD} for each scenario determined by $p=1,\ldots,20$, $\alpha=\{0.3,0.5,0.8\}$ and $n=30,100$. The scale parameter $\sigma$ has been fixed to 1. The simulation study has been performed by considering a loss-based prior on $p$, the Jeffreys prior for the skewness parameter, $\pi(\alpha)\sim \mbox{Be}(1/2,1/2)$, and for the scale parameter, $\pi(\sigma)\propto 1/\sigma$ (as discussed in Section \ref{PrAMS}). For $\beta_0$ and $\beta_1$ we have used the Zellner g-prior \citep{Zellner86} with $g=n$, which is a bivariate normal with zero means and covariance matrix $\Sigma=n(X^\prime X)^{-1}$, where $X = (1, x_{1i})$.

For this model as well the posterior distribution is analytically intractable. Therefore, we have implemented an MCMC procedure (Metropolis within Gibbs samples) with 10000 iterations and a burn-in period of 5000 iterations. The frequentist analysis of the posterior for $p$ is shown in Figure \ref{FigCov_Logistic}. The coverage of the posterior 95\% credible interval appears to be very similar whether we consider the different values of the skewness parameter $\alpha$ or the sample size. For what it concerns the MSE, we note some differences when the sample size is 30, although these are most certainly due to the relatively small amount of information about $p$ contained in the sample. This difference vanishes for $n=100$.
\begin{figure}[h!]
\centering
\begin{tabular}{cc}
	{\includegraphics[width=6.75cm]{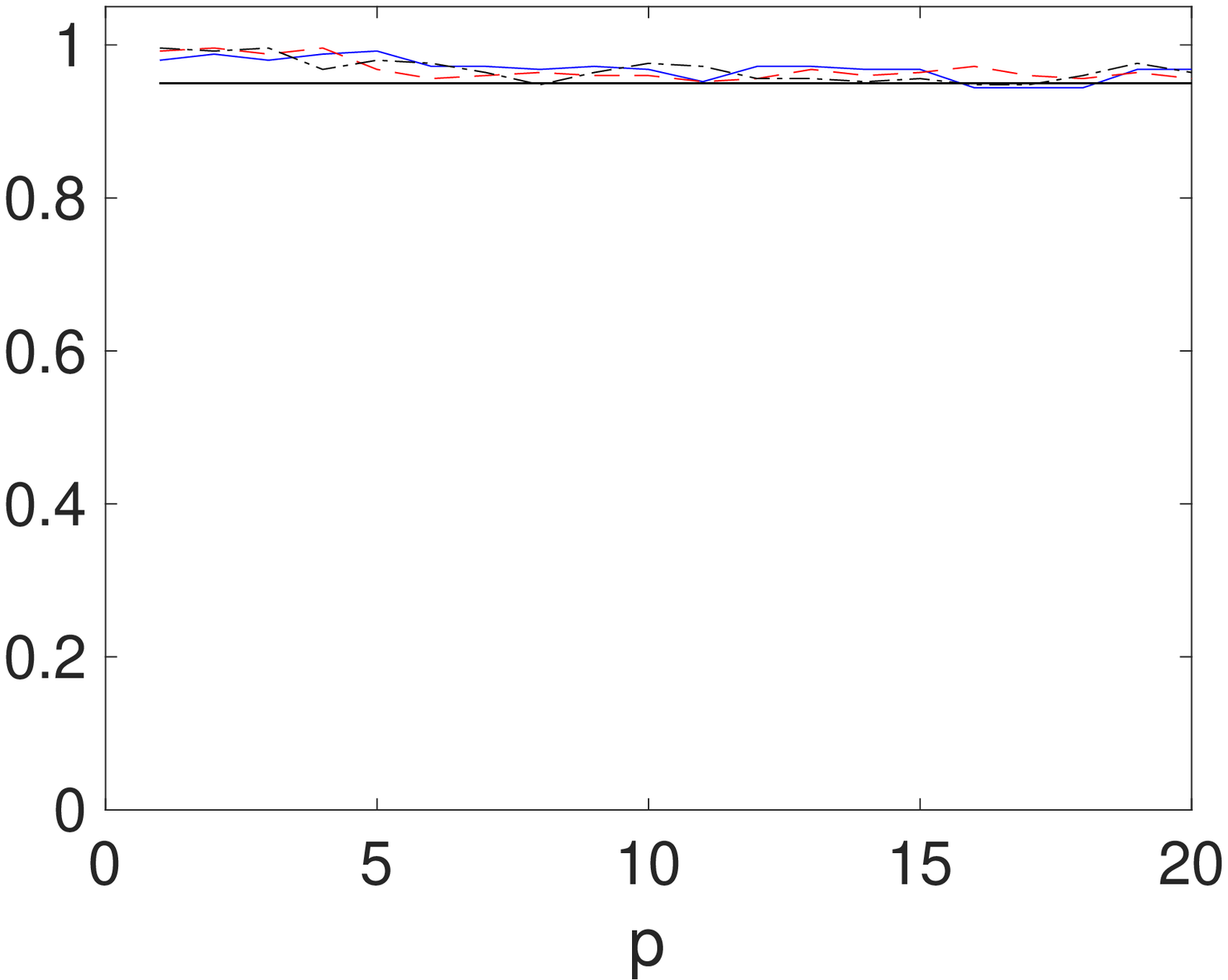}} &
	{\includegraphics[width=6.75cm]{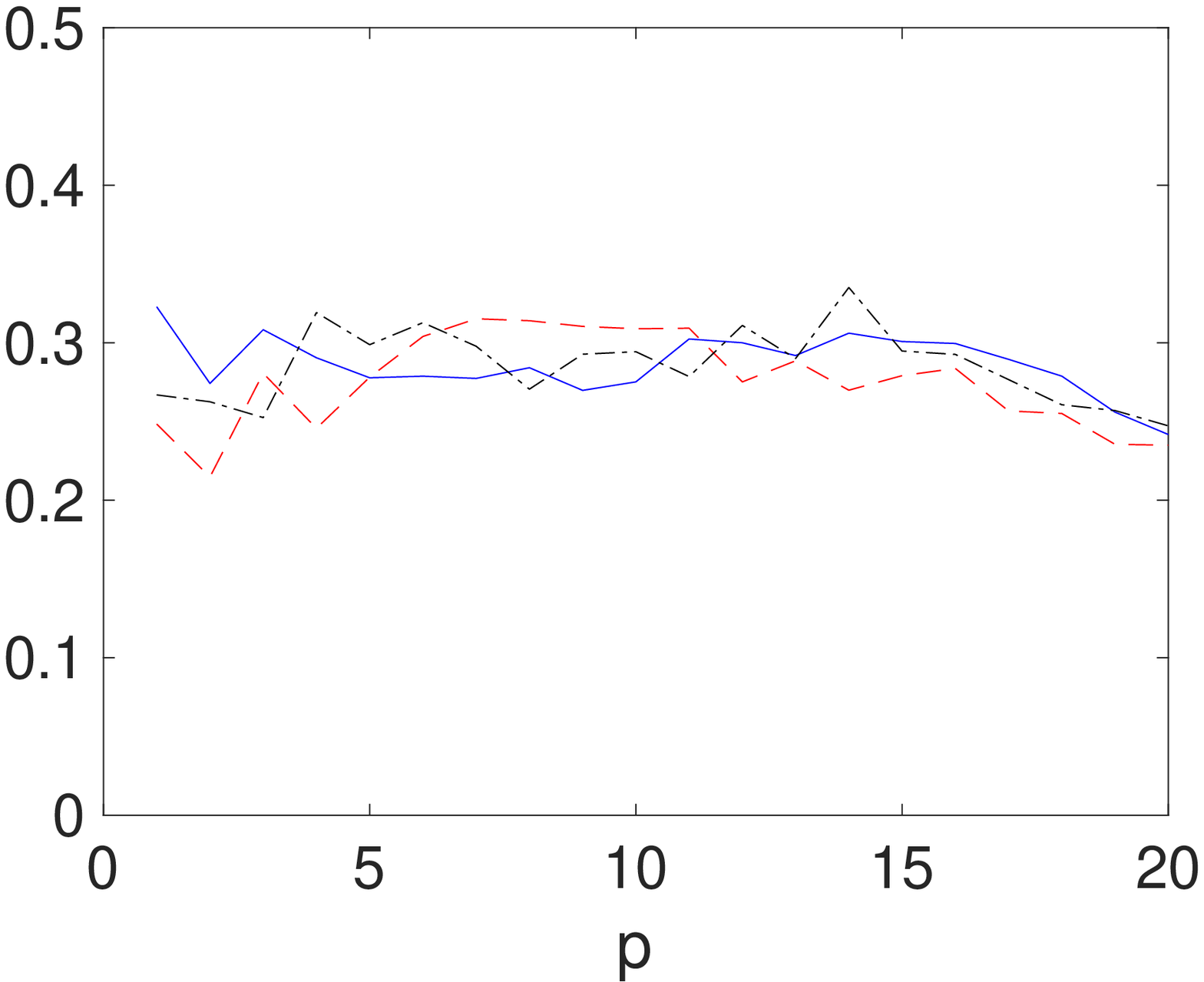}} \\
	{\includegraphics[width=6.75cm]{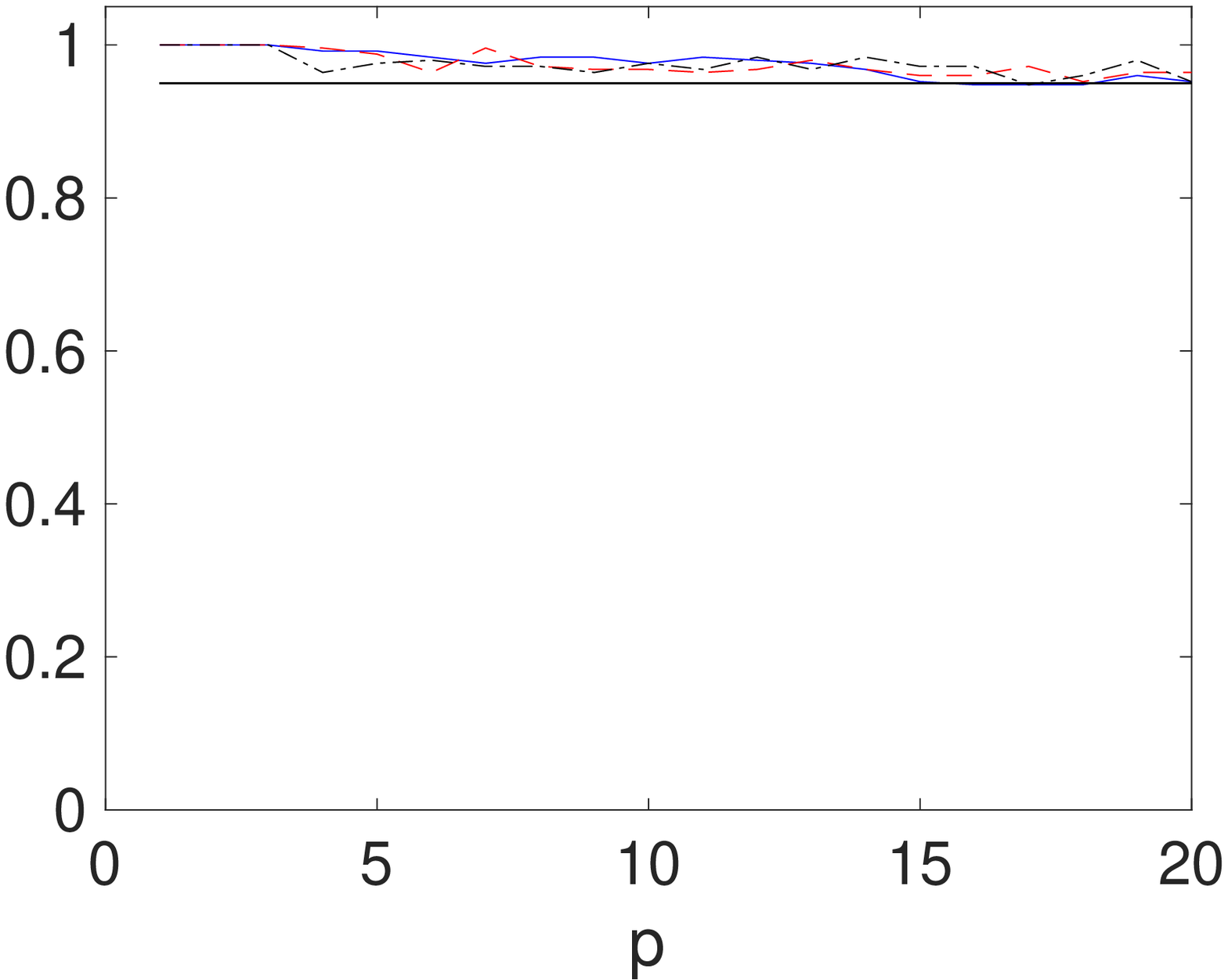}} &
	{\includegraphics[width=6.75cm]{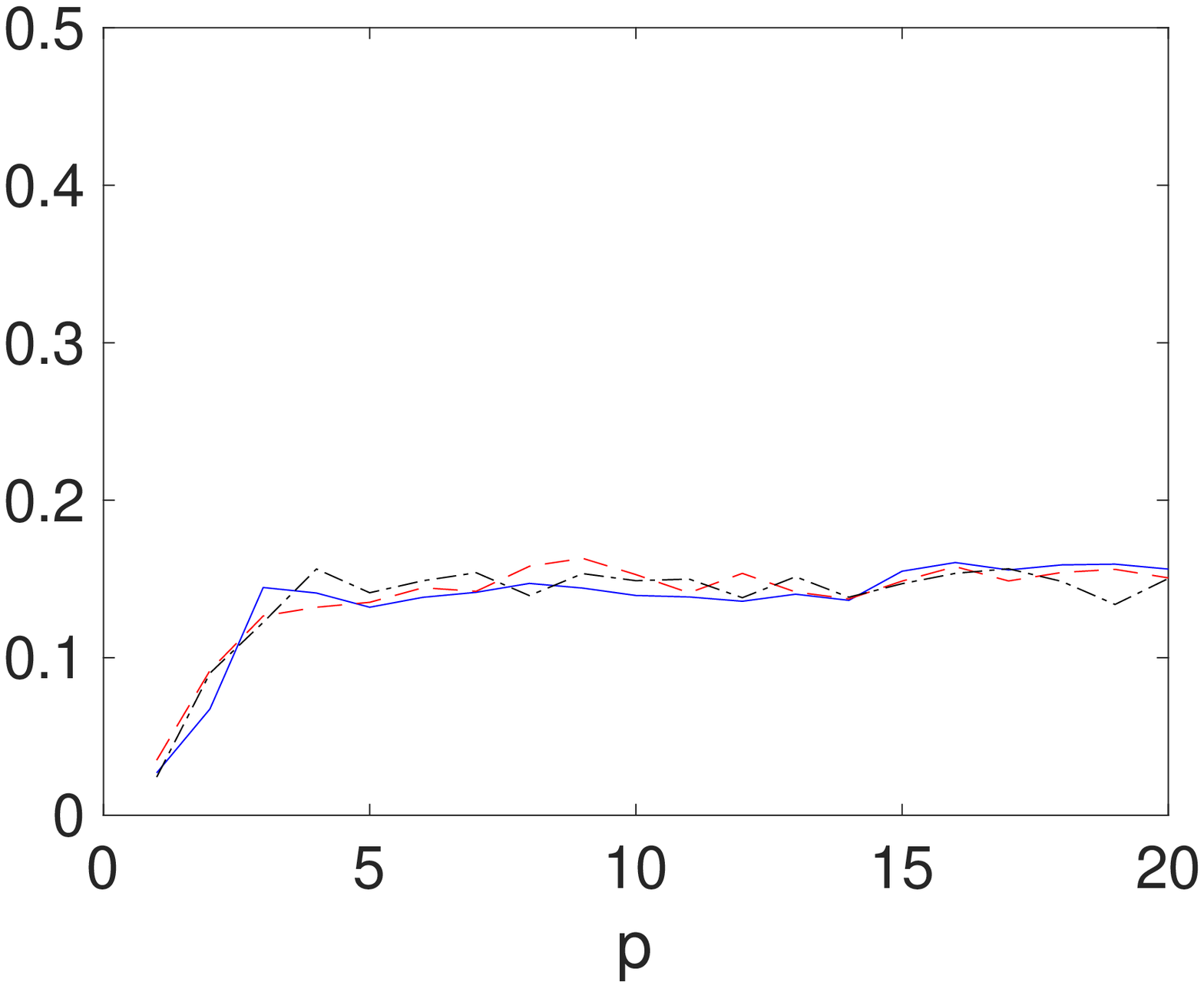}} \\
\end{tabular}
   	\caption{Frequentist coverage of the $95\%$ credible intervals for $p$ (left) and square root of relative mean squared error of the estimator of $p$ (right) for the SGLD. The simulations are for $\alpha=0.2$ (blue continuous line), $\alpha=0.5$ (red dashed line) and $\alpha=0.8$ (black dotted line), and for $n=30$ (top), $n=100$ (bottom).}
\label{FigCov_Logistic}
\end{figure}
{Similarly to the study of the SEPD model, we report the complete inferential procedure for a single sample drawn from the model in \eqref{Reg_SGLD}, where we have set $\beta_0=-2.5$, $\beta_1=3$, $\alpha=0.23$, $p=9$ and $\sigma=1$. We run an MCMC procedure with $30.000$ iterations and a burn-in period of $5.000$ iterations. The posterior chains and histograms are plotted in Figure \ref{FigCov_Logistic_special}, with the corresponding posterior statistics reported in Table \ref{table_SLD_spec}. We note that the posterior means and medians give an excellent point representation of the true parameter values, and that the posterior credible intervals contain the above true values giving a high level accuracy of the estimates.}
\begin{table}[h!]
\centering
\begin{tabular}{cccc}
\hline
Parameter & Mean & Median & $95\%$ C.I. \\
\hline
$\alpha$ & 0.1363 & 0.136 & (0.1155,0.1615) \\
$\beta_0$ & -2.5396 & -2.54 & (-2.5775, -2.4964) \\
$\beta_1$ & 3.0048 & 3.0041 & (2.9639, 3.0494) \\
$p$ & 8.9295 & 9 & (8, 10) \\
\hline
\end{tabular}
\caption{{Summary statistics of the posterior distributions for the parameters of the simulated data from an SGLD with $\alpha = 0.13$, $\beta_0 = -2.5$, $\beta_1 = 3$, $p=9$ and $n=300$.}}
\label{table_SLD_spec}
\end{table}

\begin{figure}[h!]
\centering
\begin{tabular}{cc}
	{\includegraphics[width=6.75cm]{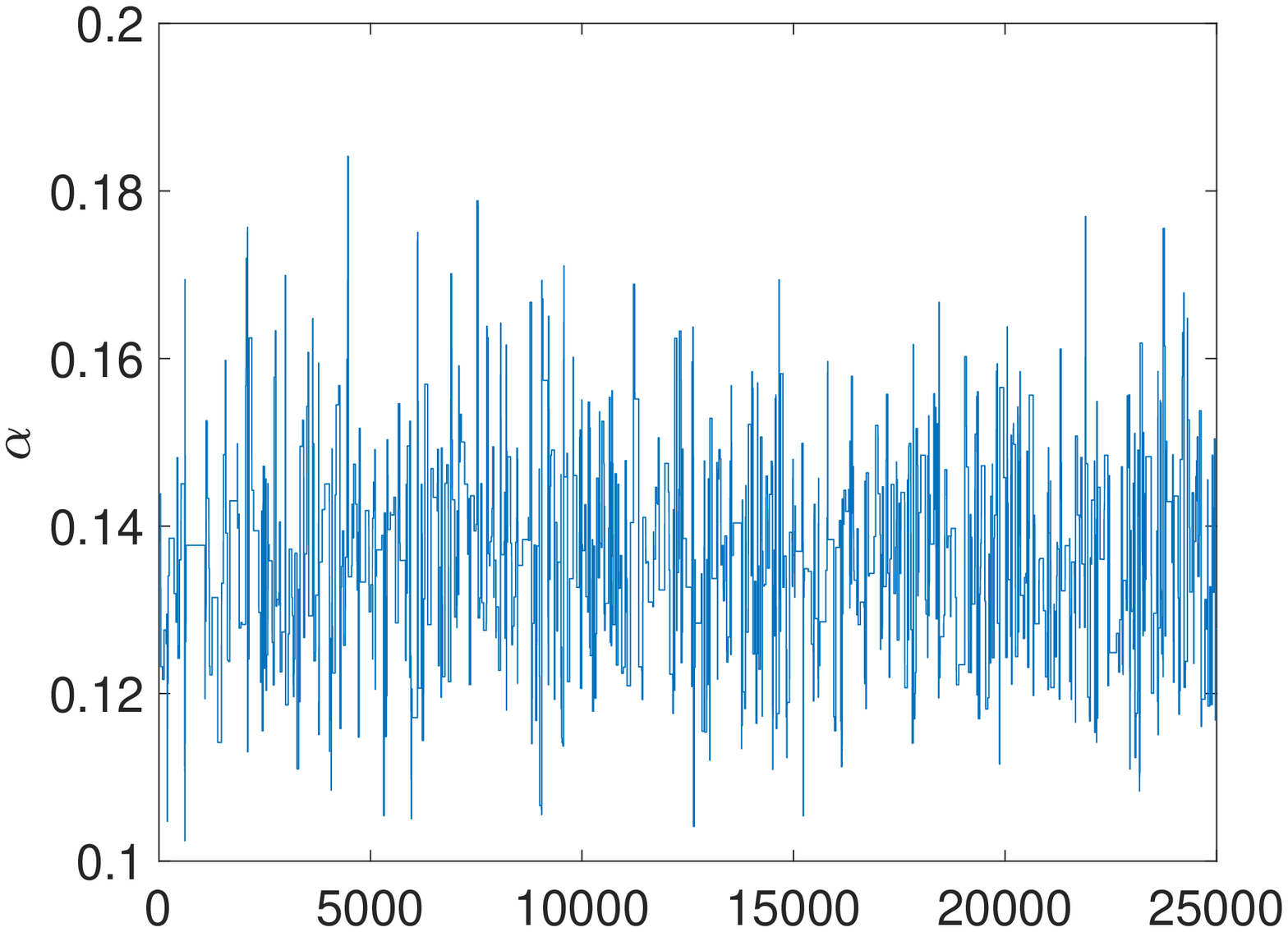}} &
	{\includegraphics[width=6.75cm]{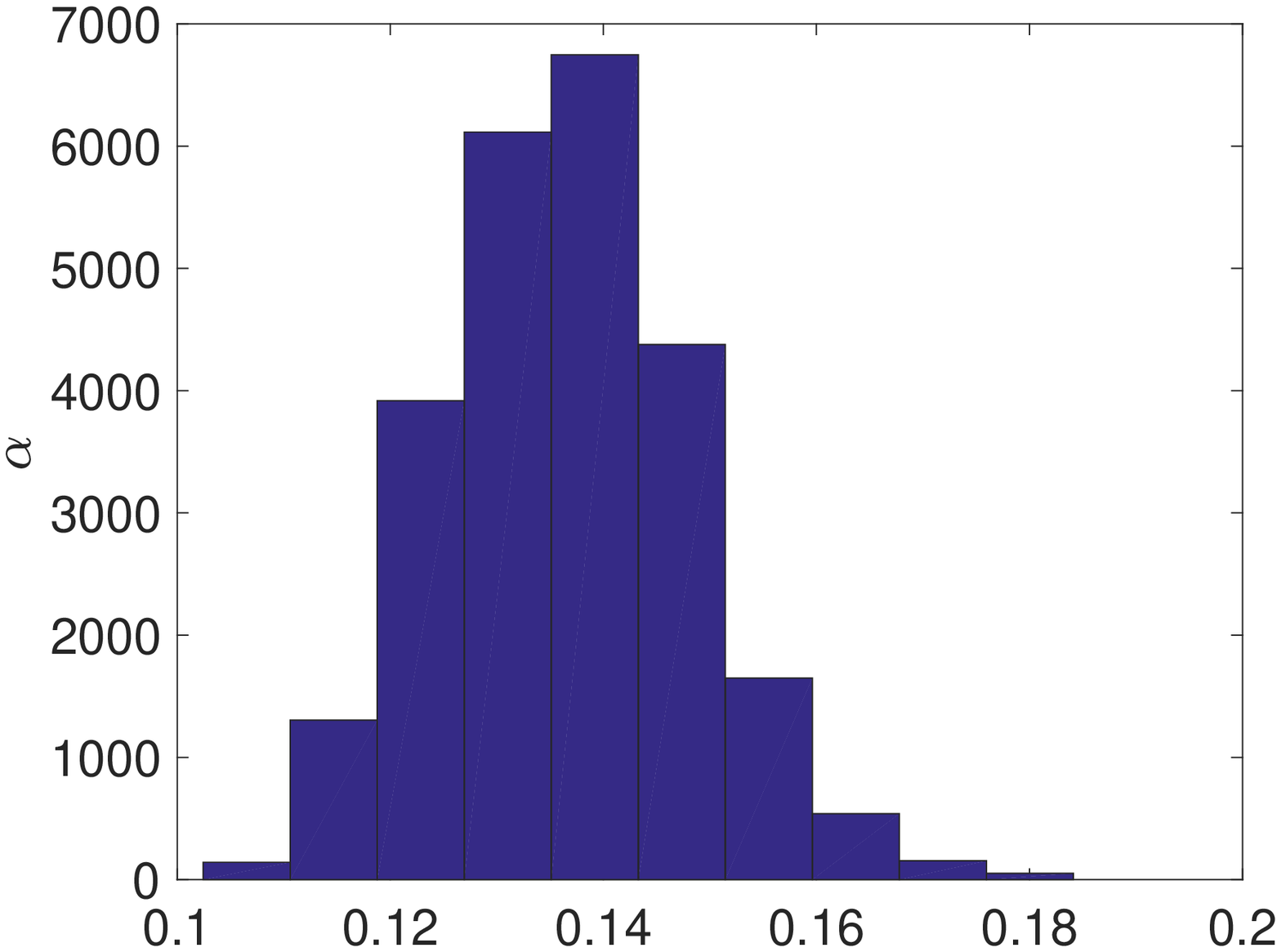}} \\
	{\includegraphics[width=6.75cm]{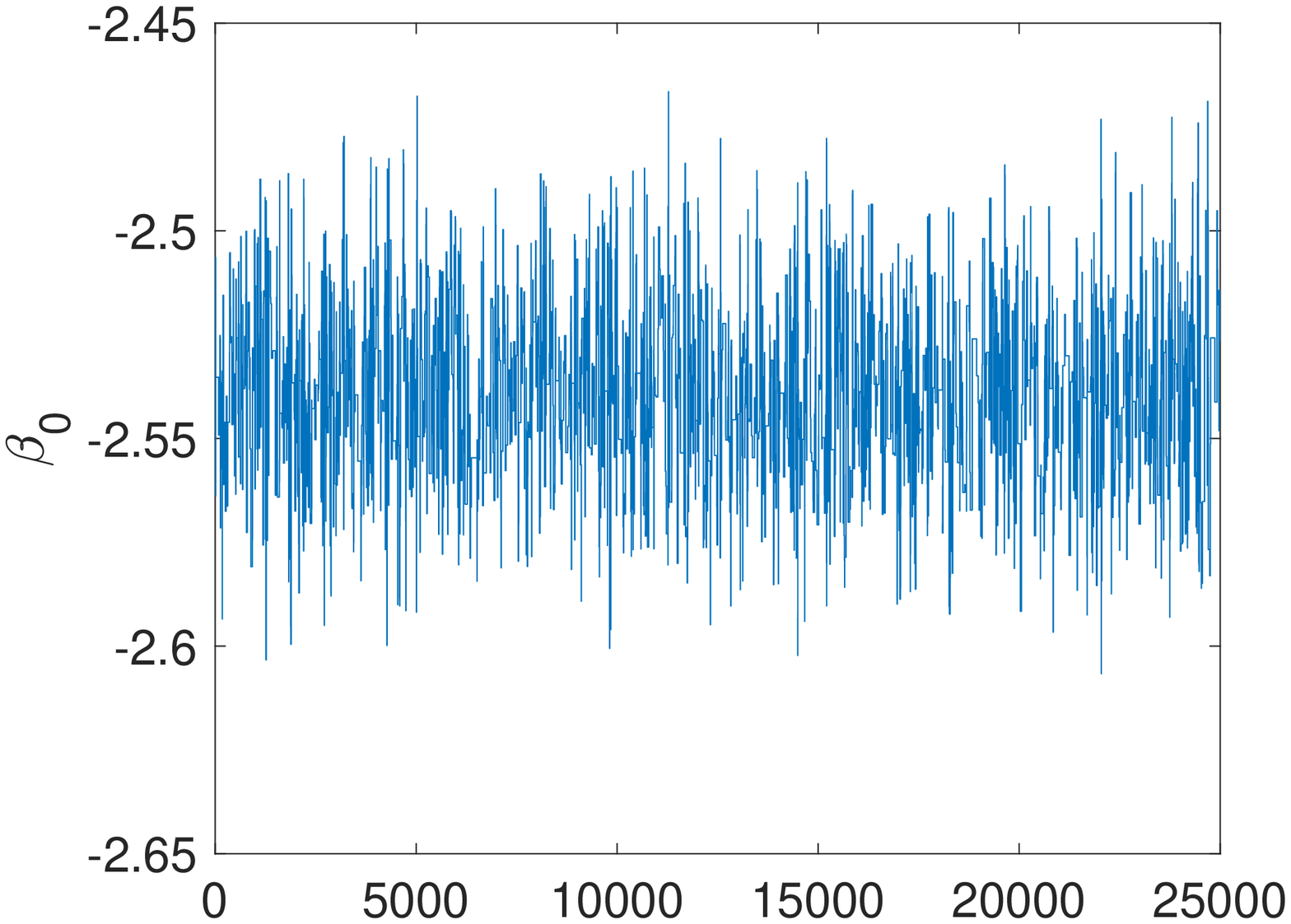}} &
	{\includegraphics[width=6.75cm]{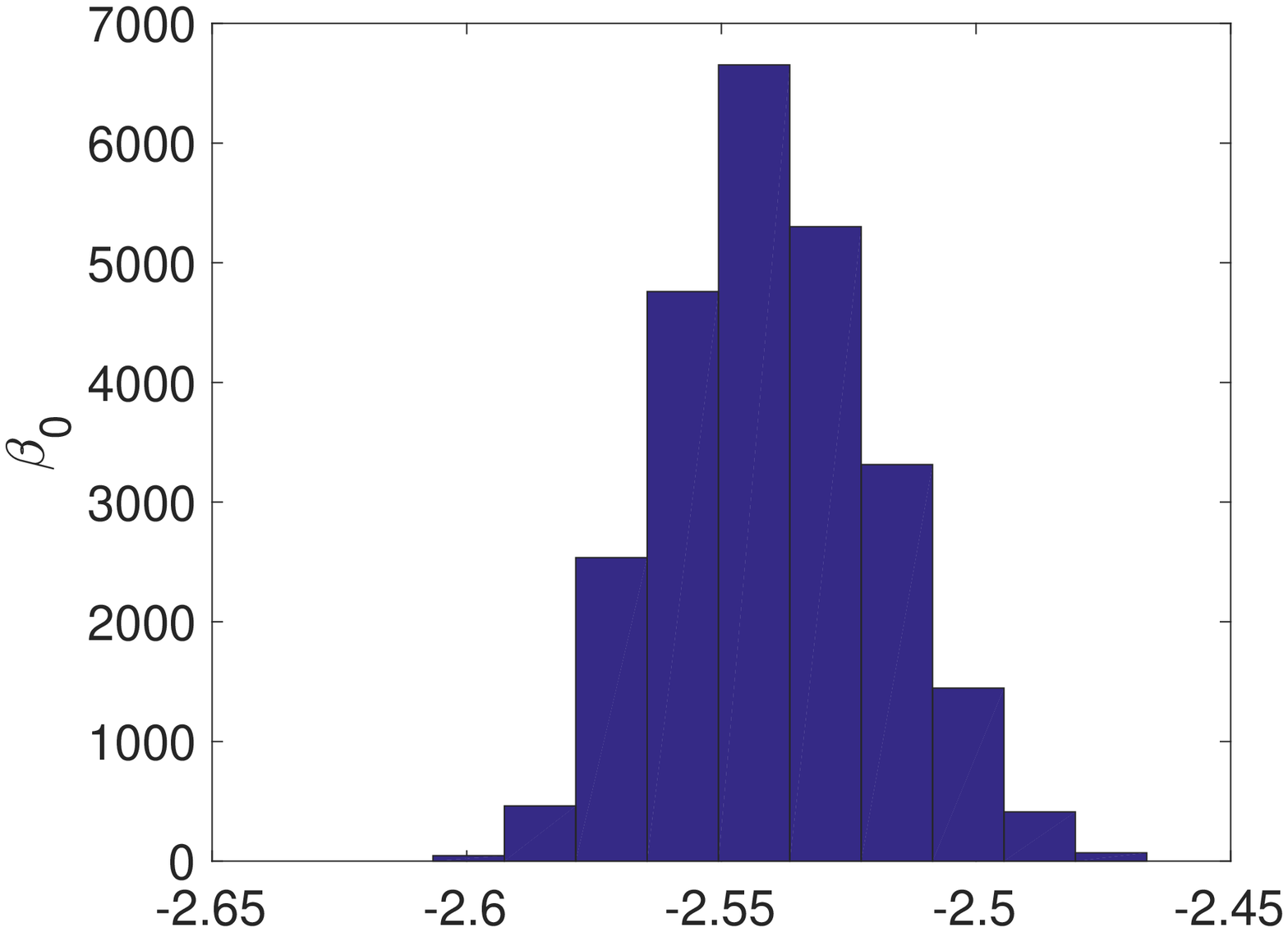}} \\
	{\includegraphics[width=6.75cm]{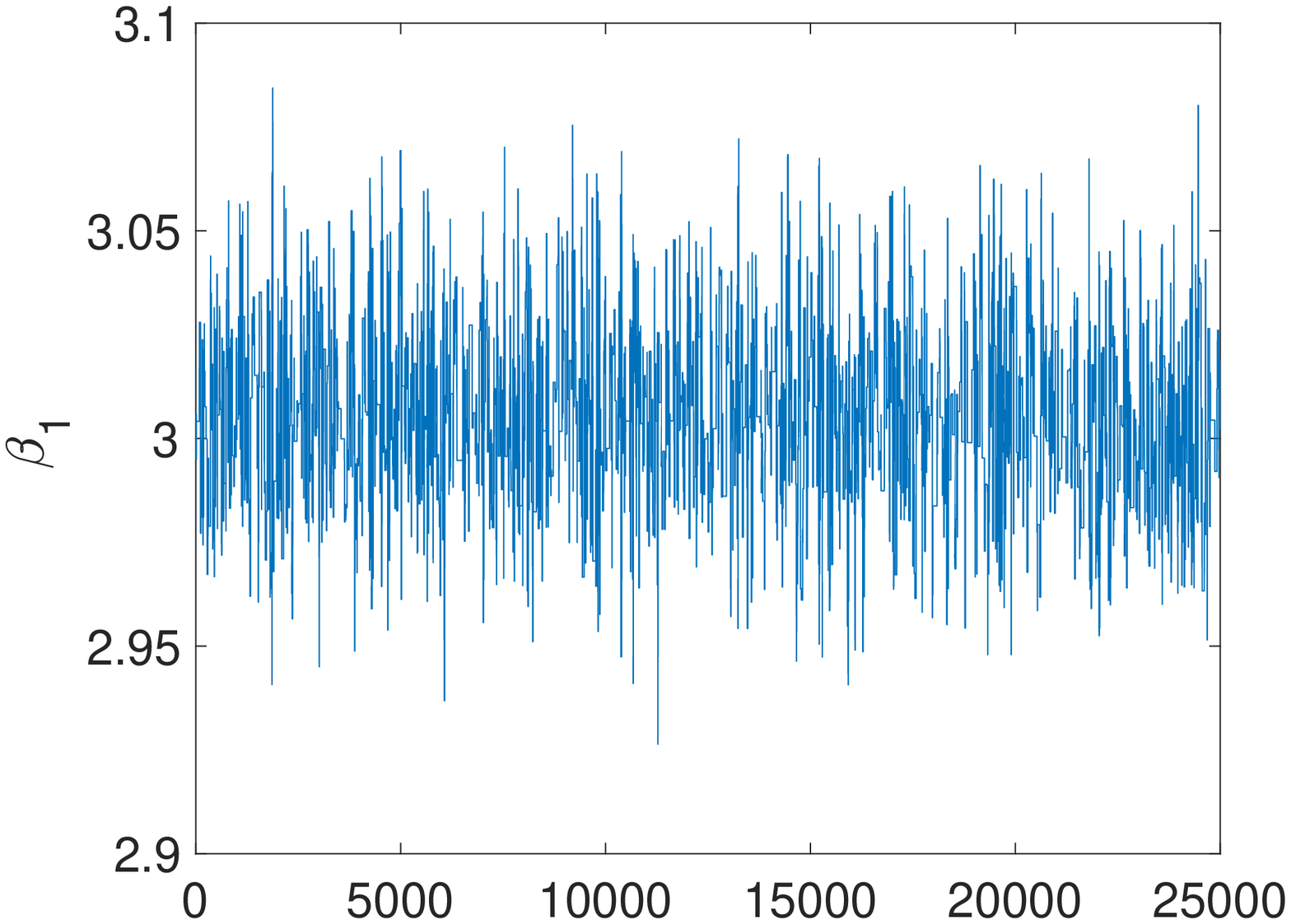}} &
	{\includegraphics[width=6.75cm]{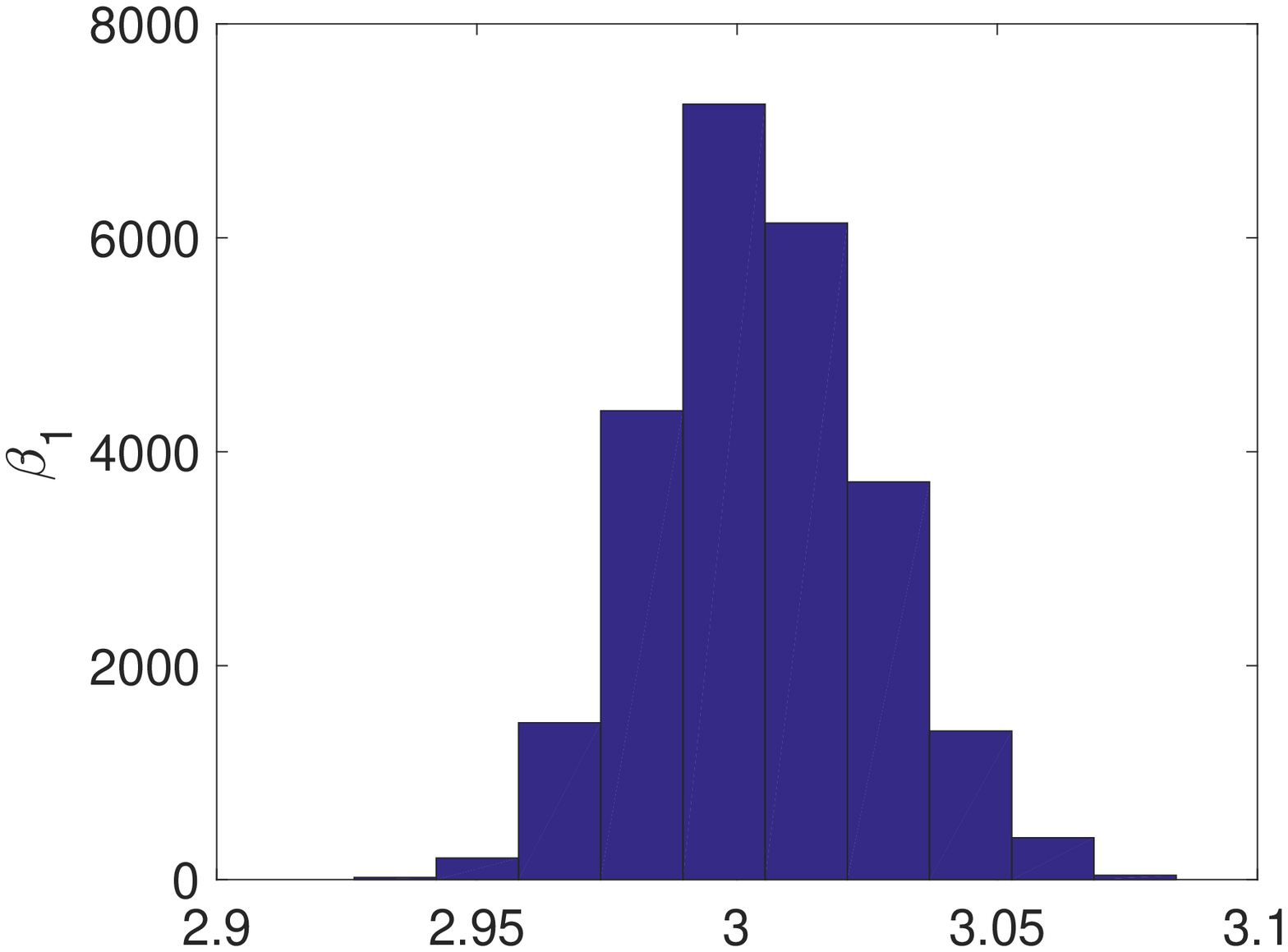}} \\
	{\includegraphics[width=6.75cm]{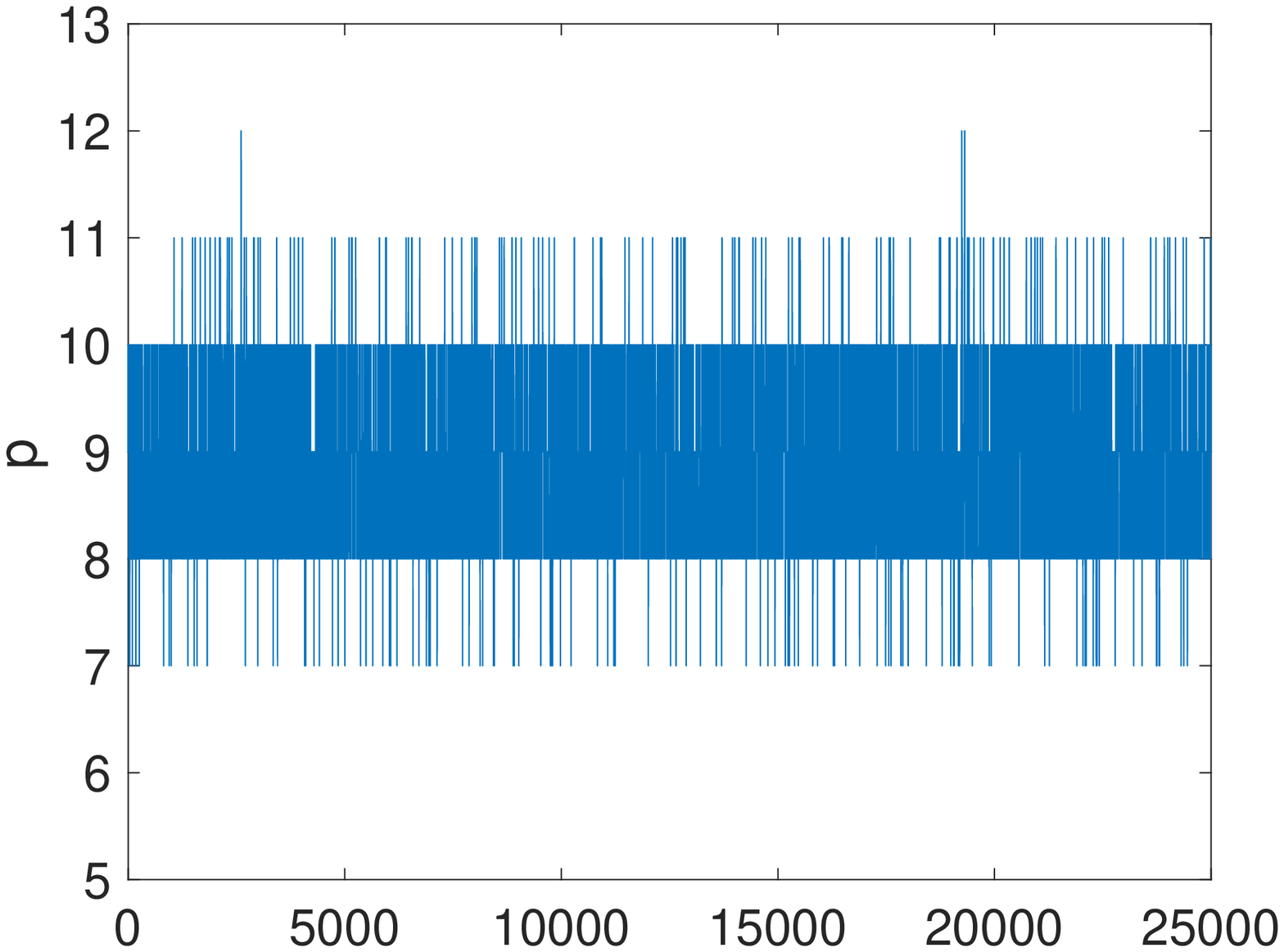}} &
	{\includegraphics[width=6.75cm]{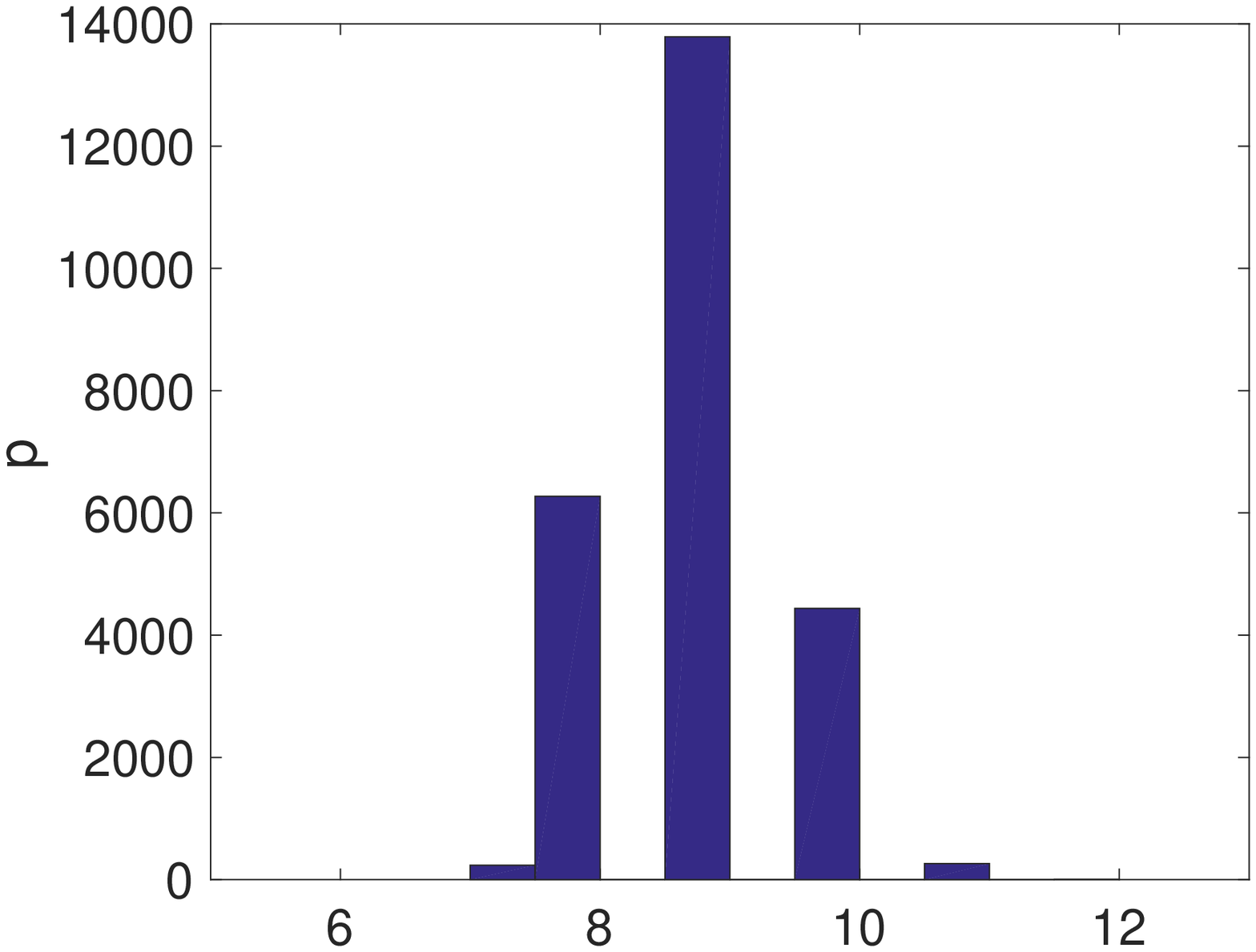}} \\
\end{tabular}
   	\caption{{ panels) of the parameters for the simulated data from the SGLD with $\alpha = 0.13$, $\beta_0 = -2.5$, $\beta_1 = 3$, $p=9$ and $n=300$.}}
\label{FigCov_Logistic_special}
\end{figure}


\section{Real Data Analysis}
\label{Real}
In this section, we present two different examples with publicly available data to illustrate how the loss-based prior for the tail parameter $p$ performs. In the first example we analyse the Nordpool Electricity prices by means of an autoregressive model with error terms distributed as a skewed exponential power, while in the second example we apply a linear regression model with error terms distributed as a skewed generalised logistic to Small Cell Cancer data.

\subsection{Nordpool Electricity Prices Data}

We use monthly prices (in level) to estimate models for electricity traded \citep{Bottazzi2011, Trindade2010} in Nordpool countries: in particular, Finland and Denmark. The prices, which have been obtained directly from the corresponding power exchanges, are plotted in Figure \ref{Fig:Prices}. Note that, for Denmark, we have averaged the two hourly zonal prices from Nordpool. The data is considered as the growth rate, meaning that we model the standardised first differences.
\begin{figure}[h!]
\centering
\begin{tabular}{c}
	\includegraphics[scale=0.5]{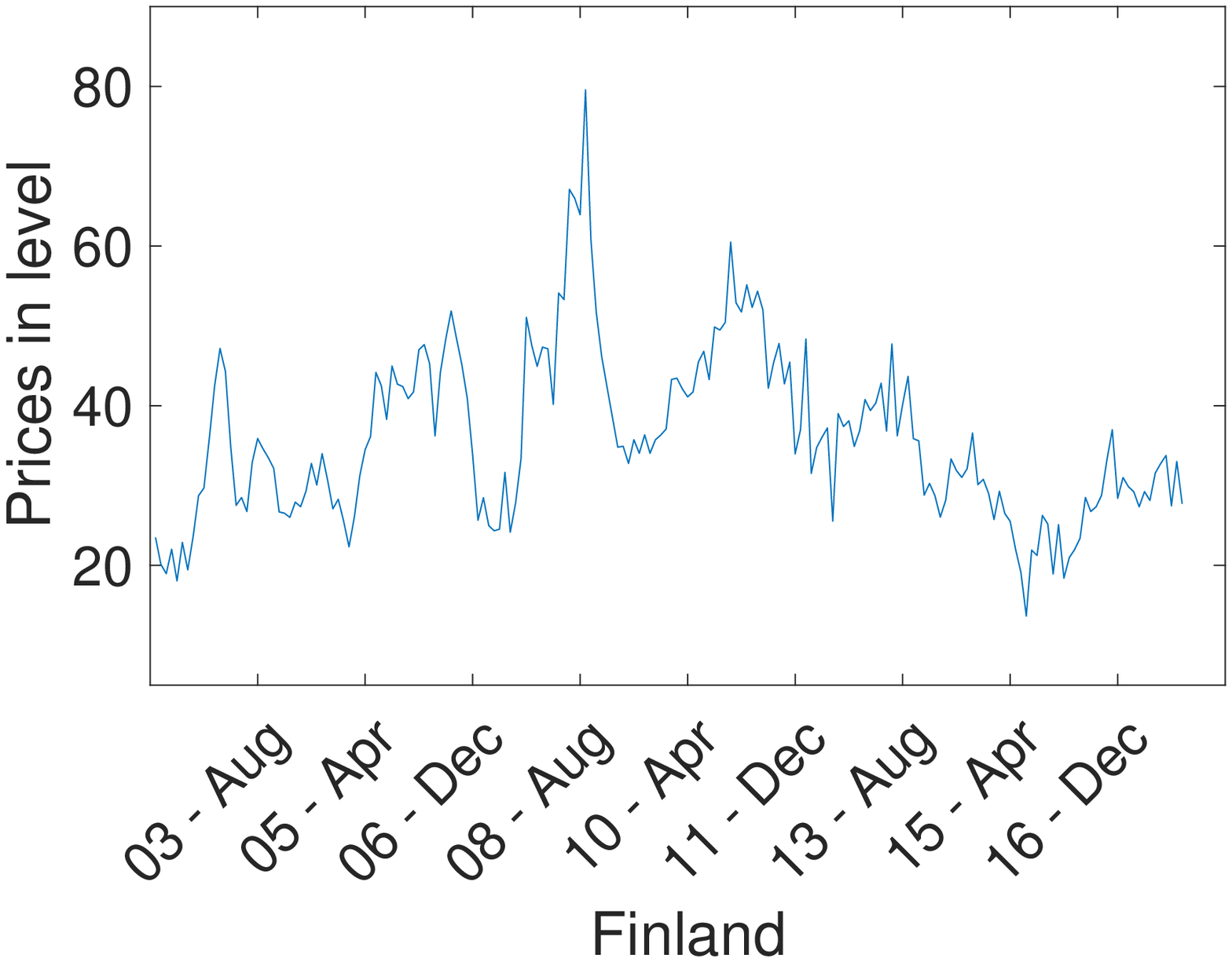} \\
	\includegraphics[scale=0.5]{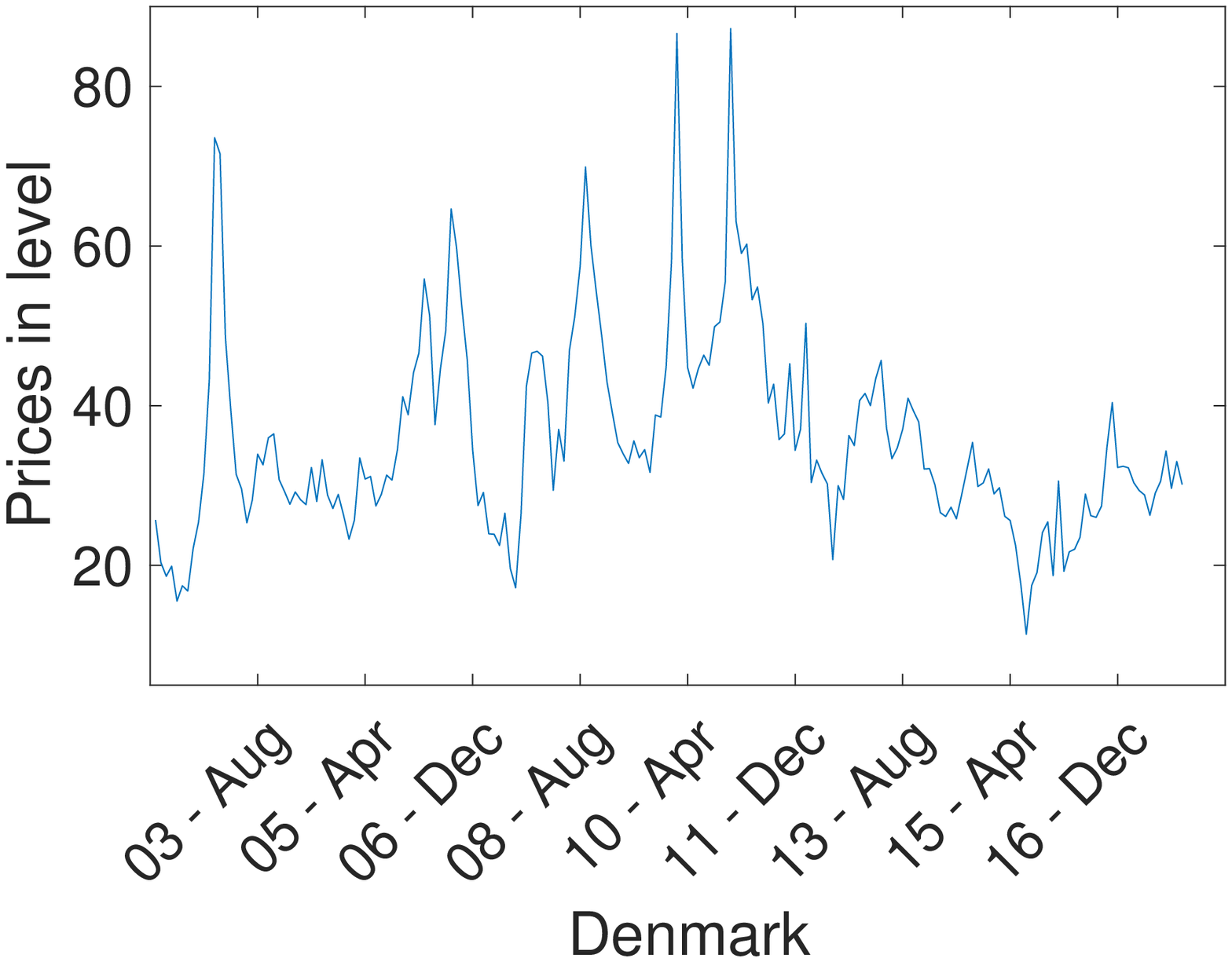} \\
\end{tabular}
\caption{Monthly electricity prices (in level) for Finland (top) and Denmark (bottom) from January 2003 to December 2017.}
\label{Fig:Prices}
\end{figure}
Finally, of the $T=180$ observation points, we use the first ten years as estimation sample and the last five years as forecast evaluation period.

The data is modelled with a univariate autoregressive model with one lag, where the error terms are SEPD. The results of the analysis are based on one-step-ahead forecasting process with a rolling window approach of 10 years for both the countries, and we have a forecast evaluation period of 60 observations (from January 2013 to December 2017). Following the results in Section \ref{Simu}, we run the estimation procedure through Gibbs sampling with a burn-in of $5.000$ iterations and for the forecasting procedure we use the remaining $15.000$ iterations.

We assess the goodness of our forecasts using different point and density metrics. For point forecasts, we use the root mean square errors (RMSEs) for the monthly prices as follows:
\begin{equation}
\mbox{RMSE} = \sqrt{\frac{1}{T-R} \sum_{t=R}^{T-1} \left(\hat{y}_{t+1|t}-y_{t+1|t}\right)^2}, \label{RMSE}
\end{equation}
where $T$ is the number of observations, $R$ is the length of the rolling window and $\hat{y}_{t+1|t}$ are the price forecasts.

To evaluate density forecasts, we use both the average log predictive score and the average continuous ranked probability score (CRPS). The log predictive score is computed as follows (see \cite{Geweke2010})
\begin{equation}
s_t(y_{t+1}) = \log{\left(f(y_{t+1}|I_t\right)}, \notag
\end{equation}
where $f(y_{t+1}|I_t)$ is the predictive density for $y_{t+1}$ constructed using information up to time $t$. In addition, following \cite{Gneiting2007} and \cite{GneitingRanjan2011}, we also compute the continuous ranked probability score, which has some advantages with respect to the log-score. In fact, it is less sensitive to outliers. It can be computed as follows:
\begin{equation}
\mbox{CRPS}_t(y_{t+1}) = \int_{-\infty}^{\infty} \left(F(z) - \mathbb{I}\{y_{t+1}\le z\}\right)^2 dz = E_f|Y_{t+1}-y_{t+1}| - 0.5 E_f|Y_{t+1}- Y'_{t+1}|,
\end{equation}
where $F$ denotes the cumulative distribution function associated with the predictive density $f$, $\mathbb{I}\{y_{t+1}\le z\}$ denotes an indicator function taking value $1$ if $y_{t+1}\le z$ and $0$ otherwise, and $Y_{t+1}$ and $Y'_{t+1}$ are independent random draws from the posterior predictive density. 

In Table \ref{tab_Forecast}, we report the RMSEs, average log-scores and average CRPS for the benchmark model, which is referred as the AR model with frequentist estimation. We compare the results from the Ordinary Least Squares (OLS) benchmark with the results obtained from the Bayesian AR with Normal error and our model based on SEPD errors. We also report the ratios of each model RMSE (average CRPS) to the baseline AR model, such that entries less than 1 indicate that the given model yields forecasts more accurate than those from the baseline. For the log-score, positive differences in score indicate that the given model outperforms the baseline.

\begin{table}[h!]
\centering
\begin{tabular}{ll|ccc}
\hline
 & Forecast & OLS & Bayesian Normal & SEPD error \\ 
\hline
\textit{Finland} & RMSE & 0.749 & 1.003 & 1.022 \\
 & log-score & -1.334 & 0.090 & 0.106 \\
 & CRPS & 0.491 & 0.899 & 0.891 \\
\hline
\textit{Denmark} & RMSE & 0.541 & 1.001 & 1.021 \\
 & log-score & -1.143 & 0.013 & 0.165 \\
 & CRPS & 0.355 & 0.990 & 0.918 \\
\hline
\end{tabular}
\caption{Point (RMSE) and density forecast (average log predictive score and average CRPS) for Finland and Denmark. {The first column (OLS) refers to the benchmark model and shows the values of the RMSE, average log predictive score and average CRPS. The second (Bayesian Normal) and third (SEPD error) columns refer to the RMSE ratios, score differences and CRPS ratios with respect to the benchmark model (OLS).}}
\label{tab_Forecast}
\end{table}

For both Finland and Denmark the point forecast appears to be worse than the benchmark. This is more obvious for the AR model with SEPD errors, although the values are not far from one. There is a noticeable improvement, in using SEPD errors, when we focus on density forecast. In fact, considering the log-score, we have a improvement in considering SEPD (instead of normal) errors from 0.090 to 0.106 for Finland, and a more obvious improvement from 0.013 to 0.165 for Denmark.

\subsection{Small Cell Cancer Data}
In this second example we illustrate the loss-based prior for the tail parameter $p$ when we employ a linear regression model with SGLD errors. The data has been obtained from \cite{Ying1995}, where a lung cancer study with two differente types of treatment has been performed. In particular, the study contained $n=121$ survival times (in log-days) of patients with small cell lung cancer (SCLC) to whom were administrated two different therapies. A treatment consisted of a combination of etoposide (E) and cisplatin (P) in any order. The patient were split into two treatment groups: treatment A ($62$ patients), where the therapy consisted in administering P followed by E; treatment B ($59$ patients), where the therapy consisted in administering E followed by P. We regress the survival time on the following two covariates: the entry age (in years) and a dummy variable identifying the type of treatment (A or B).

The estimation of the parameters of the regression model has been done through Monte Carlo methods, as described in Section \ref{Simu}, with 50000 iterations and a burn-in period of 10000 iterations. Figure \ref{Fig:cancer} shows the histograms of the posterior distributions for the parameters, while in Table \ref{tab_SCLC} we have the corresponding posterior statistics. 

\begin{figure}
\centering
\begin{tabular}{cc}
\includegraphics[width=6.75cm]{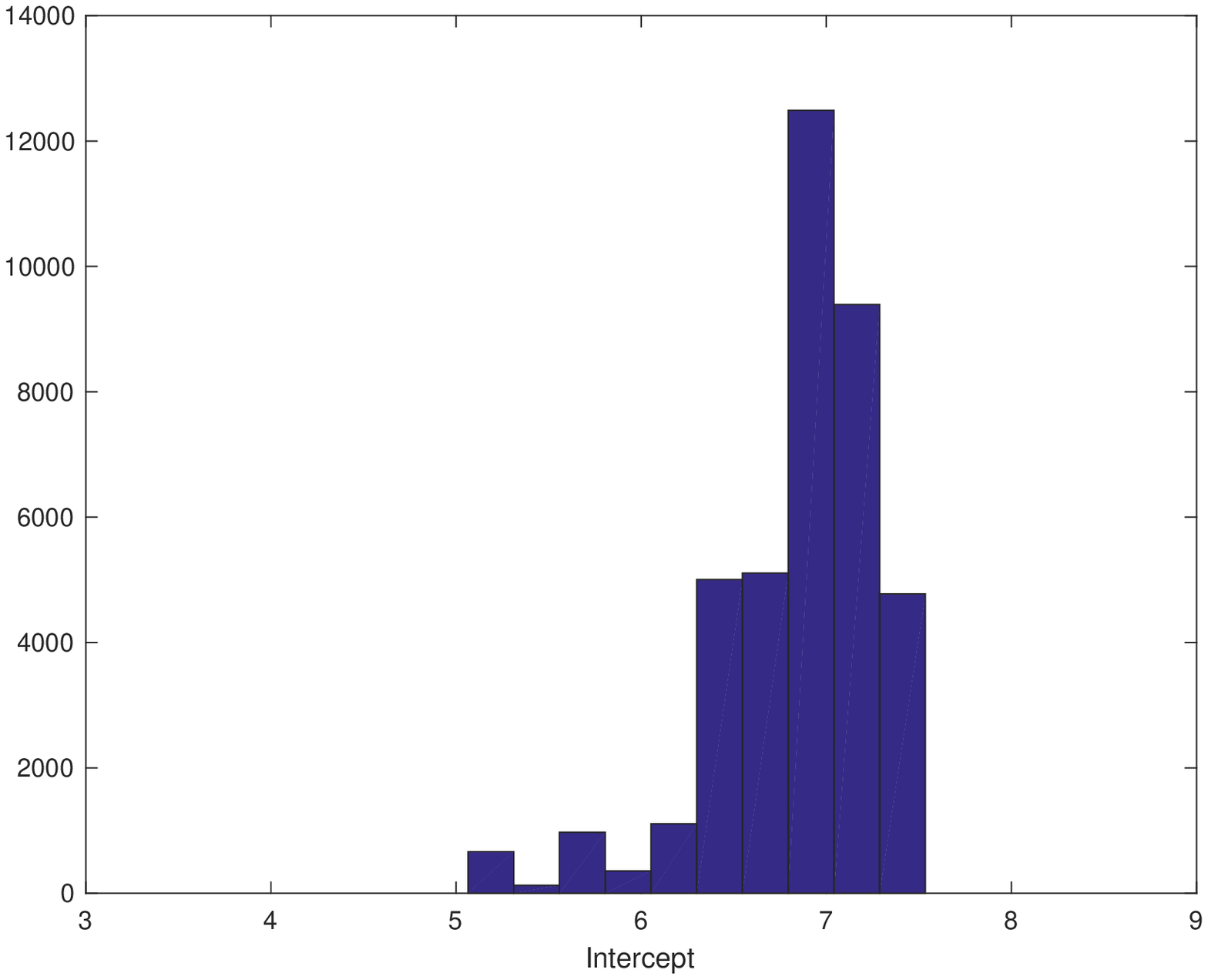} &
	\includegraphics[width=6.75cm]{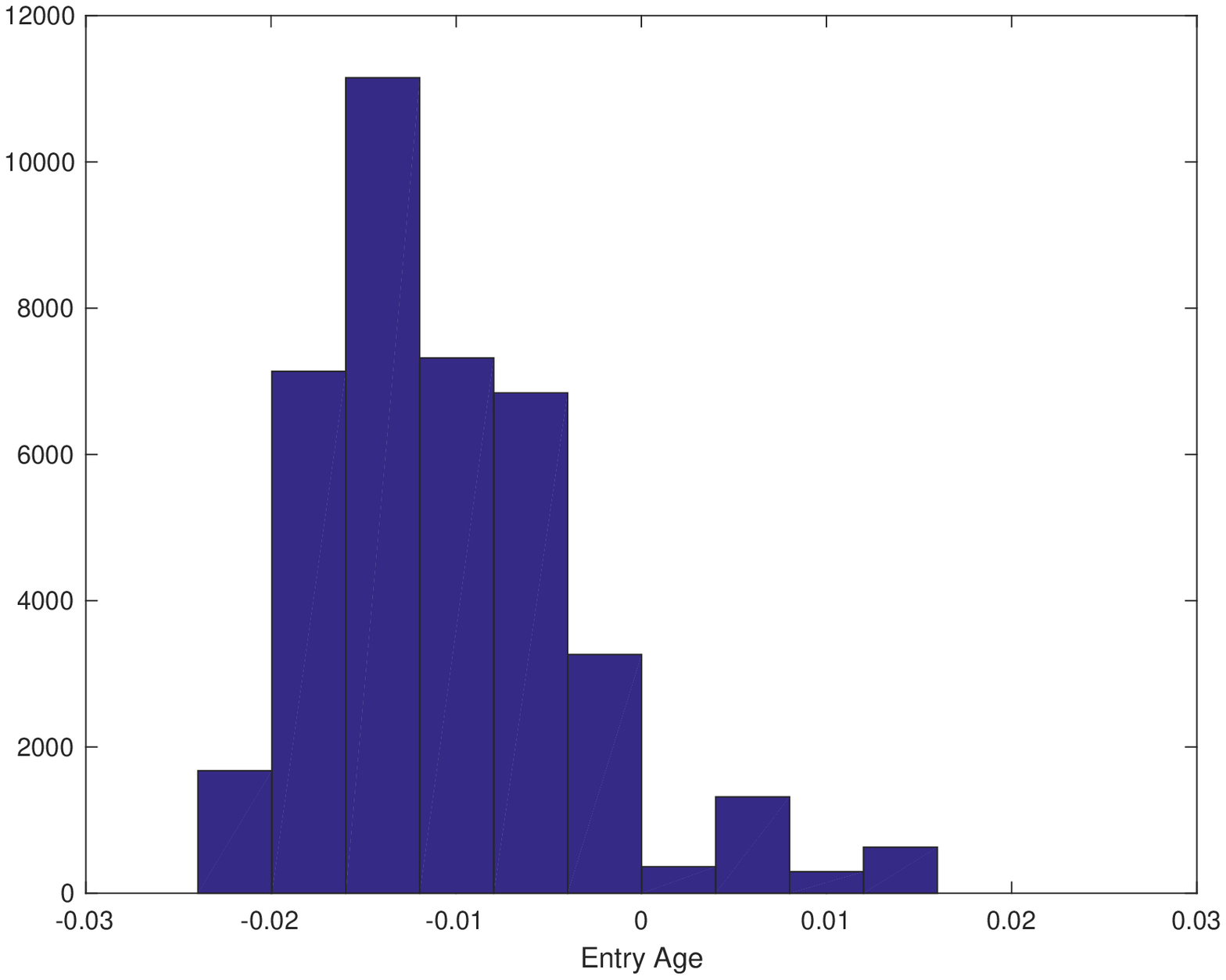} \\
\includegraphics[width=6.75cm]{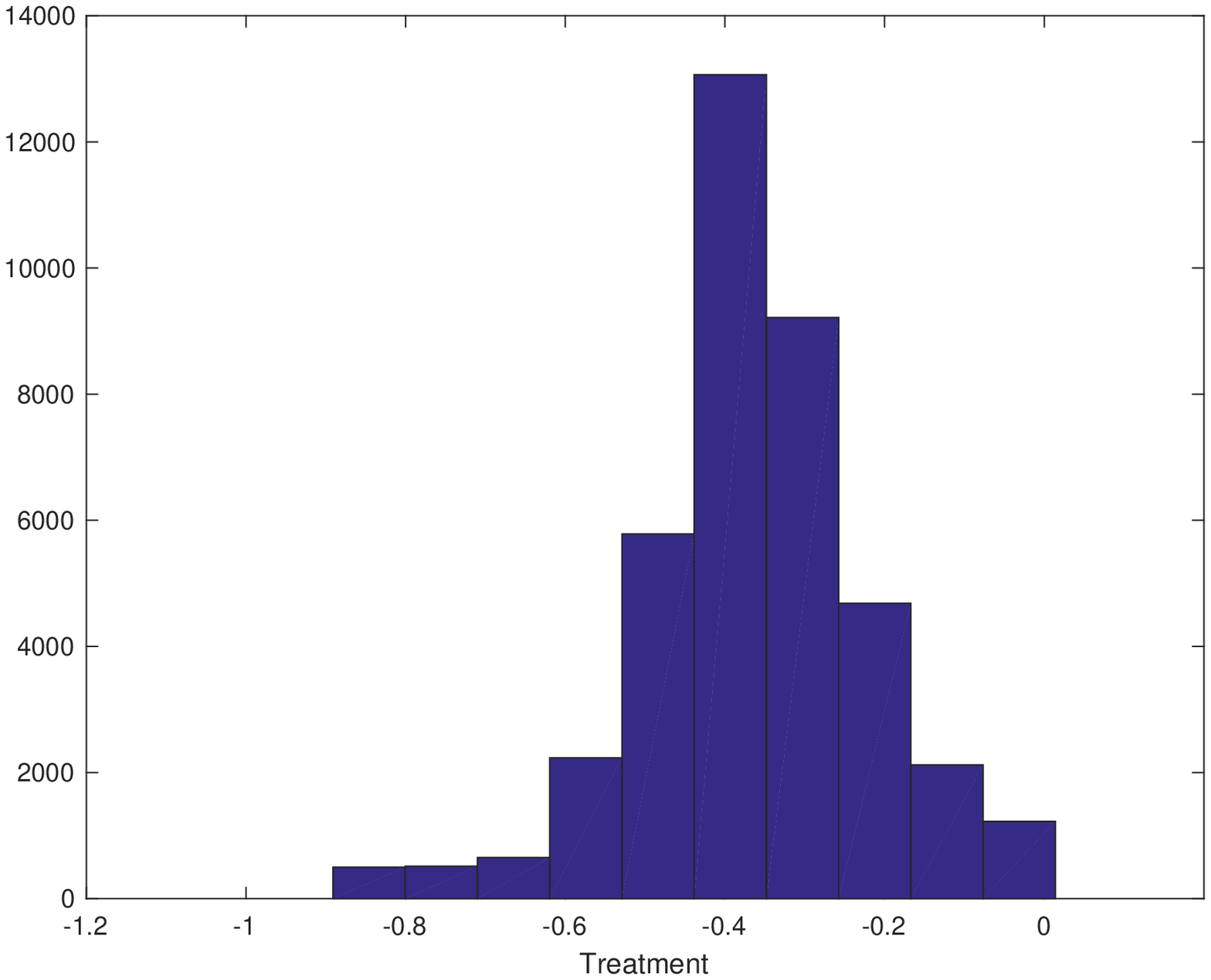} &
	\includegraphics[width=6.75cm]{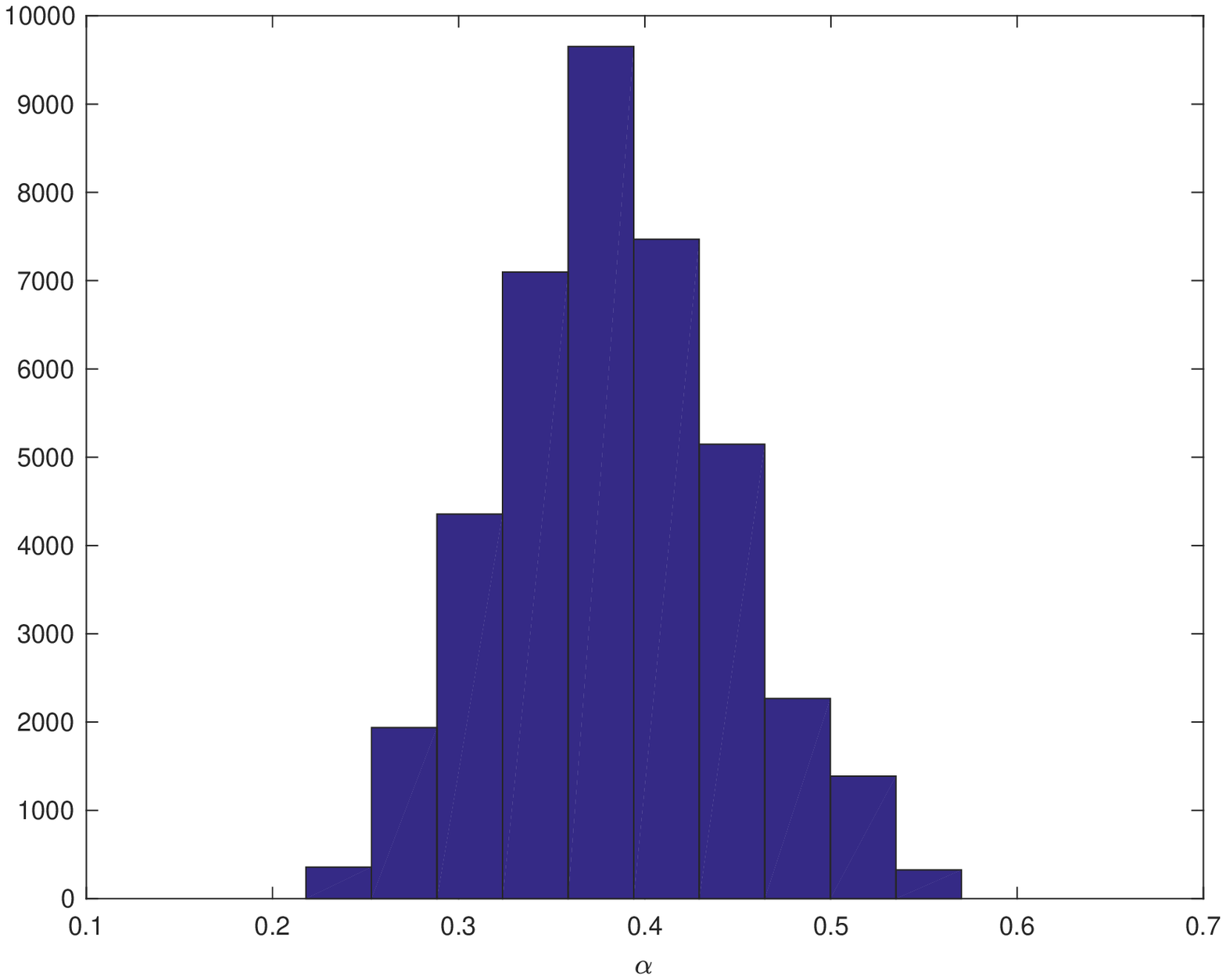} \\
\includegraphics[width=6.75cm]{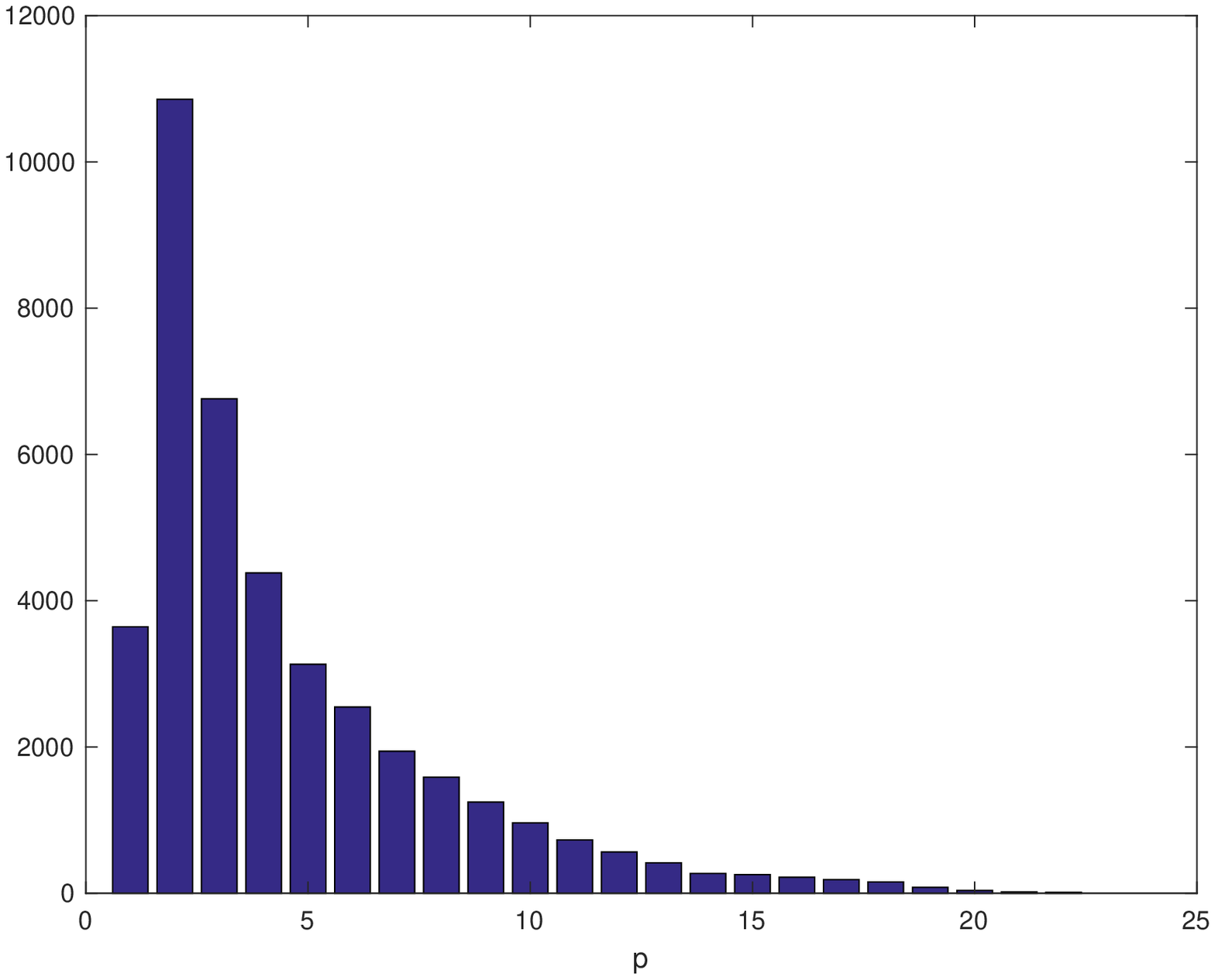} &
	\includegraphics[width=6.75cm]{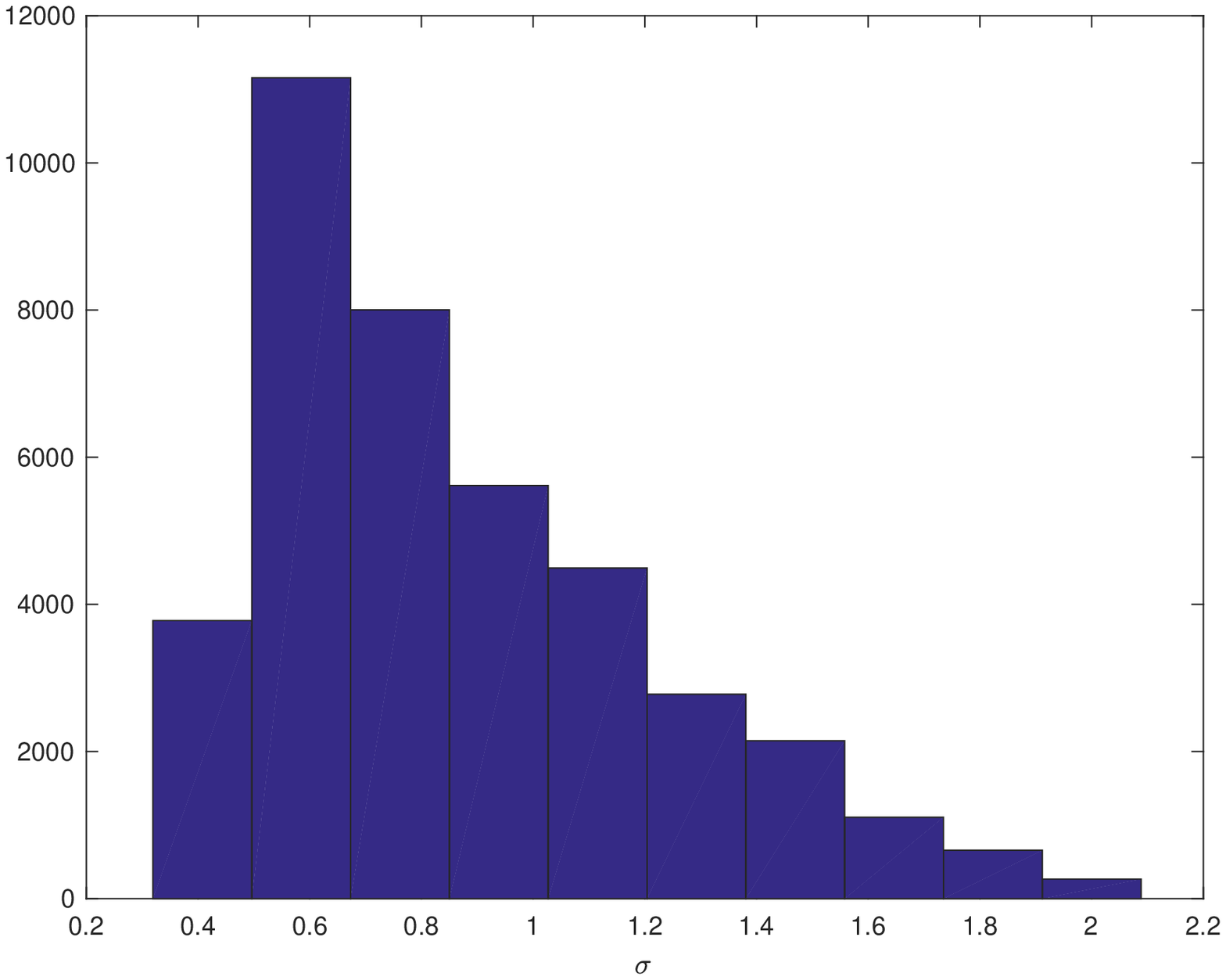} \\
\end{tabular}
\caption{Posterior histograms for the parameters of the regression model with SGLD errors for the SCLC study.}
\end{figure}\label{Fig:cancer}

The estimated intercept of the regression model, represented by the posterior median, is similar to the result in \cite{RubioYu2017}, that is 6.69. Similar considerations can be drawn for the coefficient of the entry age, which is very small and with a credible interval containing the zero; this last result supports the conclusion that the effect of the entry age on the survival time is negligible. However, the treatment appear to have a significant (negative) effect on the survival time, both under the estimated model and the results in \cite{RubioYu2017}. For the scale parameter $\sigma$, we again see agreement between the SEPD regression and the results of the above authors, although our credible interval is larger. It is not possible to perform a direct comparison of the estimated asymmetry, but in both case the value shows a clear positive skewness. Finally, our inferential procedure suggests that the data exhibit heavy tails, as indicated by the posterior median of $p=3$.

\begin{table}[h!]
\centering
\begin{tabular}{lccc|c}
\hline
Parameter & Mean & Median & $95\%$ C.I. & Rubio \& Yu\\
\hline
Intercept & 6.8438 & 6.9059 & (5.6692, 7.4672) & 6.690\\
Entry age & -0.0105 & -0.0117 & (-0.0216, 0.0079) & -0.009\\
Treatment & -0.3637 & -0.3611 & (-0.7161, -0.0552) & -0.446\\
$\alpha$ & 0.3837 & 0.3814 & (0.2717, 0.5116) & -0.395 ($\gamma$)\\
$p$ & 4.5349 & 3 & (1, 14) & NA\\
$\sigma$ & 0.8630 & 0.7771 & (0.3649, 1.7209) & 0.650 \\
\hline
\end{tabular}
\caption{SCLC Lung Cancer data: Posterior mean, posterior median and $95\%$ credible interval of the posterior for the regression model parameters. The last column to the right reports the posterior means from \cite{RubioYu2017}, where the skewness parameter is represented by $\gamma\in(-1,1)$ expressing positive skewness for values smaller than 0.}
\label{tab_SCLC}
\end{table}

\section{Discussion}
\label{Concl}
We have illustrated an objective Bayesian approach in the estimation of the tail parameter in two particular distributions: the skewed exponential power distribution (SEPD) and the skewed generalised logistic distribution (SGLD). This represents a new application of the well-known loss-based prior \citep{VillaWalker15}, where information theoretical considerations are used to derive minimally informative prior distributions. The SEPD and the SGLD are part of the wider family of two-piece location-scale distribution and allow to entangle skewness and tail fatness in one single probability distribution. Therefore, they represent an appealing modeling solution in scenarios where such behaviours are exhibited by the data, such as in financial applications and survival analysis. We illustrate the properties of the loss-based prior for the tail parameter of the above distributions by performing a thorough simulation study and analysis two real data sets. Furthermore, we show how the SEPD and SGLD can be used to model error terms in complex modeling situations, such as the error terms of autoregressive process for time series and error terms for linear regression models.

\section*{Acknowledgements}
Fabrizio Leisen was supported by the European Community's Seventh Framework Programme [FP7/2007-2013] under grant agreement no: 630677. \\
\noindent Luca Rossini acknowledges financial support from the European Union Horizon 2020 research and innovation programme under the Marie Sklodowska-Curie grant agreement no: 796902.
\vspace*{-8pt}

\bibliographystyle{apalike}
\bibliography{BNPAEDP}

\clearpage

\renewcommand{\thesection}{A}
\renewcommand{\theequation}{A.\arabic{equation}}
\renewcommand{\thefigure}{A.\arabic{figure}}
\renewcommand{\thetable}{A.\arabic{table}}
\setcounter{table}{0}
\setcounter{figure}{0}
\setcounter{equation}{0}

\section{Proofs}
\label{AppA}
\begin{proof}[Proof of Theorem \ref{ThKL}]
Note that,
\begin{equation}
D_{\text{KL}}\left(f_p^{\alpha,\mu,\sigma}||f_{p'}^{\alpha,\mu,\sigma}\right)=D_{\le}\left(f_p^{\alpha,\mu,\sigma}||f_{p'}^{\alpha,\mu,\sigma}\right)+D_{>}\left(f_p^{\alpha,\mu,\sigma}||f_{p'}^{\alpha,\mu,\sigma}\right),\notag
\end{equation}
where the above  Kullback--Leibler divergences are:
\begin{equation*}
D_{\le}\left(f_p^{\alpha,\mu,\sigma}||f_{p'}^{\alpha,\mu,\sigma}\right)= \int_{-\infty}^{\mu} f_p^{\alpha,\mu,\sigma}(y) \log{\left\{\frac{f_p^{\alpha,\mu,\sigma}(y)}{f_{p'}^{\alpha,\mu,\sigma}(y)}\right\}}  \,dy
\end{equation*}
and
$$ D_{\ge}\left(f_p^{\alpha,\mu,\sigma}||f_{p'}^{\alpha,\mu,\sigma}\right)= \int_{\mu}^{\infty} f_p^{\alpha,\mu,\sigma}(y) \log{\left\{\frac{f_p^{\alpha,\mu,\sigma}(y)}{f_{p'}^{\alpha,\mu,\sigma}(y)}\right\}}  \,dy.$$
Focusing on the first term, we have that:
\begin{equation*}
\begin{split}
D_{\le}\left(f_p^{\alpha,\mu,\sigma}||f_{p'}^{\alpha,\mu,\sigma}\right)&= \int_{-\infty}^{\mu} f_p^{\alpha,\mu,\sigma}(y) \log{\left\{\frac{f_p^{\alpha,\mu,\sigma}(y)}{f_{p'}^{\alpha,\mu,\sigma}(y)}\right\}}  \,dy\\
&= \int_{-\infty}^{\mu} \frac{1}{\sigma}f_p\left( \frac{y-\mu}{2\alpha\sigma}\right)\log{\left\{\frac{f_p\left( \frac{y-\mu}{2\alpha\sigma}\right)}{f_{p'}\left( \frac{y-\mu}{2\alpha\sigma}\right)}\right\}}  \,dy\\\
\end{split}
\end{equation*}
where the last equality follows from \eqref{GenTwoPiece}.  By performing the change of variable $x=\frac{y-\mu}{2\alpha\sigma}$, we get
\begin{equation*}
\begin{split}
D_{\le}\left(f_p^{\alpha,\mu,\sigma}||f_{p'}^{\alpha,\mu,\sigma}\right)&= 2\alpha \int_{-\infty}^0 f_{p}(x) \log{\left\{\frac{f_{p}(x)}{f_{p'}(x)}\right\}} \,dx \\
&= 2\alpha D_{\le}(f_{p}||f_{p'}). \\
\end{split}
\end{equation*}
For the models under consideration, i.e. the SEPD and the SGLD, $f_{p}=f_p^{\alpha=0.5,\mu=0,\sigma=1}$ and $f_{p'}=f_{p'}^{\alpha=0.5,\mu=0,\sigma=1}$. Therefore, 
$$D_{\le}\left(f_p^{\alpha,\mu,\sigma}||f_{p'}^{\alpha,\mu,\sigma}\right)=2\alpha D_{\le}\left(f_p^{\alpha=0.5,\mu=0,\sigma=1}||f_{p'}^{\alpha=0.5,\mu=0,\sigma=1}\right).$$
Similarly, 
$$D_{>}\left(f_p^{\alpha,\mu,\sigma}||f_{p'}^{\alpha,\mu,\sigma}\right)=2(1-\alpha)D_{>}\left(f_p^{\alpha=0.5,\mu=0,\sigma=1}||f_{p'}^{\alpha=0.5,\mu=0,\sigma=1}\right).$$
Note that,
$$D_{\le}\left(f_p^{\alpha=0.5,\mu=0,\sigma=1}||f_{p'}^{\alpha=0.5,\mu=0,\sigma=1}\right)=D_{>}\left(f_p^{\alpha=0.5,\mu=0,\sigma=1}||f_{p'}^{\alpha=0.5,\mu=0,\sigma=1}\right).$$ 
As a consequence,
\begin{equation*}
\begin{split}
D_{KL}\left(f_p^{\alpha=0.5,\mu=0,\sigma=1}||f_{p'}^{\alpha=0.5,\mu=0,\sigma=1}\right)&=2D_{\le}\left(f_p^{\alpha=0.5,\mu=0,\sigma=1}||f_{p'}^{\alpha=0.5,\mu=0,\sigma=1}\right)\\
&=2D_{>}\left(f_p^{\alpha=0.5,\mu=0,\sigma=1}||f_{p'}^{\alpha=0.5,\mu=0,\sigma=1}\right).
\end{split}
\end{equation*}
Therefore, by using the above identity, we can conclude the proof: 
\begin{align*}
D_{\text{KL}}\left(f_p^{\alpha,\mu,\sigma}||f_{p'}^{\alpha,\mu,\sigma}\right)&=D_{\le}\left(f_p^{\alpha,\mu,\sigma}||f_{p'}^{\alpha,\mu,\sigma}\right)+D_{>}\left(f_p^{\alpha,\mu,\sigma}||f_{p'}^{\alpha,\mu,\sigma}\right) \notag \\
&=2\alpha D_{\le}\left(f_p^{\alpha=0.5,\mu=0,\sigma=1}||f_{p'}^{\alpha=0.5,\mu=0,\sigma=1}\right) + \\
&\hspace{2cm}+2(1-\alpha)D_{>}\left(f_p^{\alpha=0.5,\mu=0,\sigma=1}||f_{p'}^{\alpha=0.5,\mu=0,\sigma=1}\right)\notag \\
&=\alpha D_{KL}\left(f_p^{\alpha=0.5,\mu=0,\sigma=1}||f_{p'}^{\alpha=0.5,\mu=0,\sigma=1}\right) + \\
&\hspace{2cm}+(1-\alpha)D_{KL}\left(f_p^{\alpha=0.5,\mu=0,\sigma=1}||f_{p'}^{\alpha=0.5,\mu=0,\sigma=1}\right)\notag \\
&=D_{KL}\left(f_p^{\alpha=0.5,\mu=0,\sigma=1}||f_{p'}^{\alpha=0.5,\mu=0,\sigma=1}\right). \notag
\end{align*}
\end{proof}
\begin{proof}[Proof of Theorem \ref{KLSEPD}]
Using the probability density function displayed in equation \eqref{SEPD}, we have that:
\begin{align}
D_{KL}\left(f_p^{\alpha=0.5,\mu=0,\sigma=1}||f_{p'}^{\alpha=0.5,\mu=0,\sigma=1}\right) &=\int_{-\infty}^{\infty} K(p) e^{-\frac{1}{p}\left|z\right|^{p}} \log{\left\{\frac{K(p)e^{-\frac{1}{p}\left|z\right|^{p}}}{K(p') e^{-\frac{1}{p'}\left|z\right|^{p'}}}\right\}}\,dz \notag \\
&=2\int_{0}^{\infty}K(p) e^{-\frac{1}{p}z^{p}} \log{\left\{\frac{K(p)e^{-\frac{1}{p}z^{p}}}{K(p') e^{-\frac{1}{p'}z^{p'}}}\right\}} \,dz \notag \\
&= 2\left[\log{K(p)}-\log{K(p')}\right]K(p) \int_{0}^{\infty}e^{-\frac{1}{p}z^{p}} \, dz \notag \\
&-2\frac{K(p)}{p} \int_{0}^{\infty}e^{-\frac{1}{p}z^{p}}z^{p} \,dz + 2\frac{K(p)}{p'} \int_{0}^{\infty}e^{-\frac{1}{p}z^{p}}z^{p'} \,dz. \label{pri}
\end{align}
From 3.478.1 of \cite{GR07},
\begin{equation}
\int_{0}^{\infty}x^{\nu-1}e^{-\mu x^{p}}\,dx=\frac{1}{p}\mu^{-\nu/p}\Gamma\left(\frac{\nu}{p}\right), \notag
\end{equation} 
we have that equation \ref{pri} becomes:
\begin{align}
D_{KL}\left(f_p^{\alpha=0.5,\mu=0,\sigma=1}||f_{p'}^{\alpha=0.5,\mu=0,\sigma=1}\right)&=2\left[\log{\frac{K(p)}{K(p')}}\right]\frac{K(p)}{p}\left(\frac{1}{p}\right)^{-\frac{1}{p}} \Gamma \left(\frac{1}{p}\right)-2\frac{K(p)}{p^2}\left(\frac{1}{p}\right)^{-\frac{(p+1)}{p}} \Gamma \left(\frac{p+1}{p}\right) \notag \\
&+2\frac{K(p)}{p'}\frac{1}{p}\left(\frac{1}{p}\right)^{-\frac{(p'+1)}{p}} \Gamma \left(\frac{p'+1}{p}\right) \notag \\
&=\log{K(p)}-\log{K(p')}-p^{-1}+\frac{p^{\frac{p'}{p}}}{p'}\frac{\Gamma \left(\frac{p'+1}{p}\right)}{\Gamma \left(\frac{1}{p}\right)}. \notag
\end{align}
In conclusion, following theorem \ref{ThKL}, the Kullback-Leibler divergence has the form
\begin{equation}
D_{\text{KL}}\left(f_p^{\alpha,\mu,\sigma}||f_{p'}^{\alpha,\mu,\sigma}\right)=\log{K(p)}-\log{K(p')}-p^{-1}+\frac{p^{\frac{p'}{p}}}{p'}\frac{\Gamma \left(\frac{p'+1}{p}\right)}{\Gamma \left(\frac{1}{p}\right)}.\notag
\end{equation}
\end{proof}
\begin{proof}[Proof of Theorem \ref{KLSLD}]
Using the probability density function displayed in equation \eqref{SLD}, we have that:
\begin{align}
D_{KL}\left(f_p^{\alpha=0.5,\mu=0,\sigma=1}||f_{p'}^{\alpha=0.5,\mu=0,\sigma=1}\right)&= \int_{-\infty}^{\infty} \frac{e^{-px}}{B(p,p)(1+e^{-x})^{2p}} \log{\left[\frac{B(p',p')}{B(p,p)}\frac{e^{-px}/(1+e^{-x})^{2p}}{e^{-p'x}/(1+e^{-x})^{2p'}}\right]} \,dx= \notag \\
&=\int_0^{1} \frac{1}{B(p,p)} t^{p-1}(1-t)^{p-1} \log{\left[\frac{B(p',p')}{B(p,p)} \frac{t^{p}(1-t)^{p}}{t^{p'}(1-t)^{p'}}\right]} \,dt, \notag
\end{align}
where $t=S(x)$, $dt=s(x)dx$ and $S(x)$ and $s(x)$ are the cumulative distribution and the probability distribution function of the logistic distribution, respectively. Hence, we can extend the Kullback--Leibler divergence as
\begin{align}
D_{KL}\left(f_p^{\alpha=0.5,\mu=0,\sigma=1}||f_{p'}^{\alpha=0.5,\mu=0,\sigma=1}\right)&=\log{\left[\frac{B(p',p')}{B(p,p)} \right]} + \frac{(p-p')}{B(p,p)} \int_0^1 t^{p-1}(1-t)^{p-1} \log{(t)} \,dt \notag \\
&+ \frac{(p-p')}{B(p,p)} \int_0^1 t^{p-1}(1-t)^{p-1} \log{(1-t)}\,dt \notag \\
&=\log{\left[\frac{B(p',p')}{B(p,p)} \right]} + \frac{2 (p-p')}{B(p,p)} \int_0^1 t^{p-1}(1-t)^{p-1} \log{(t)} \,dt. \label{DKL_Lo}
\end{align}
In conclusion, from 4.253.1 of \cite{GR07} 
\begin{equation}
\int_0^1 t^{p-1}(1-t)^{p-1} \log{(t)} dt=B(p,p)\left[\psi(p) -\psi(2p)\right], \notag
\end{equation} 
where $\psi(p)$ is the digamma function, the Kullback--Leibler divergence represented in equation \ref{DKL_Lo} have the following form
\begin{align}
D_{KL}\left(f_p^{\alpha=0.5,\mu=0,\sigma=1}||f_{p'}^{\alpha=0.5,\mu=0,\sigma=1}\right)&=\log{\left[\frac{B(p',p')}{B(p,p)} \right]} +2(p-p')\left[\psi(p)-\psi(2p)\right]. \notag 
\end{align}

\end{proof}

\clearpage

\clearpage

\renewcommand{\thesection}{B}
\renewcommand{\theequation}{B.\arabic{equation}}
\renewcommand{\thefigure}{B.\arabic{figure}}
\renewcommand{\thetable}{B.\arabic{table}}
\setcounter{table}{0}
\setcounter{figure}{0}
\setcounter{equation}{0}

\section{Comparison of the Kullback--Leibler Divergence for different distributions}

\begin{table}[h!]
\centering
\begin{tabular}{cccc|cccc}
\hline
p & $D_{\text{KL}}\left(f_p^{\alpha,\mu,\sigma}||f_{p-1}^{\alpha,\mu,\sigma}\right)$ & $D_{\text{KL}}\left(f_p^{\alpha,\mu,\sigma}||f_{p+1}^{\alpha,\mu,\sigma}\right)$ & & & p & $D_{\text{KL}}\left(f_p^{\alpha,\mu,\sigma}||f_{p-1}^{\alpha,\mu,\sigma}\right)$ & $D_{\text{KL}}\left(f_p^{\alpha,\mu,\sigma}||f_{p+1}^{\alpha,\mu,\sigma}\right)$ \\
\hline
2 & 7.2093$\times 10^{-02}$ & 5.9144$\times 10^{-02}$ & & & 17& 6.1438 $\times 10^{-04}$ & 6.3413$\times 10^{-04}$ \\
3 & 2.7675$\times 10^{-02}$ & 2.6462$\times 10^{-02}$ & & & 18& 5.3875 $\times 10^{-04}$ & 5.5578 $\times 10^{-04}$ \\
4 & 1.4752$\times 10^{-02}$ & 1.4759$\times 10^{-02}$ & & &19&  4.7558 $\times 10^{-04}$ & 4.9036 $\times 10^{-04}$ \\
5 & 9.1321$\times 10^{-03}$ & 9.3019$\times 10^{-03}$ & & &20& 4.2233 $\times 10^{-04}$& 4.3523 $\times 10^{-04}$\\
6 & 6.1732$\times 10^{-03}$ & 6.3402$\times 10^{-03}$ & & &21& 3.7707 $\times 10^{-04}$& 3.8840 $\times 10^{-04}$ \\
7 & 4.4262$\times 10^{-03}$ & 4.5646$\times 10^{-03}$ & & &22& 3.3833 $\times 10^{-04}$& 3.4833 $\times 10^{-04}$ \\
8 & 3.3117$\times 10^{-03}$ &3.4223$\times 10^{-03}$ & & &23& 3.0494 $\times 10^{-04}$& 3.1381 $\times 10^{-04}$ \\
9 & 2.5594$\times 10^{-03}$ & 2.6474$\times 10^{-03}$ & & &24& 2.7599 $\times 10^{-04}$& 2.8388$\times 10^{-04}$ \\
10& 2.0292$\times 10^{-03}$ & 2.0997$\times 10^{-03}$ & & &25& 2.5074 $\times 10^{-04}$ & 2.5780 $\times 10^{-04}$ \\
11& 1.6426$\times 10^{-03}$ & 1.6996$\times 10^{-03}$ & & &26& 2.2861 $\times 10^{-04}$& 2.3494 $\times 10^{-04}$ \\
12& 1.3527$\times 10^{-03}$ & 1.3994$\times 10^{-03}$ & & &27& 2.0911 $\times 10^{-04}$& 2.1481 $\times 10^{-04}$ \\
13& 1.1302$\times 10^{-03}$ & 1.1688$\times 10^{-03}$ & & &28& 1.9186 $\times 10^{-04}$ & 1.9701 $\times 10^{-04}$ \\
14& 9.5616$\times 10^{-04}$ & 9.8838 $\times 10^{-04}$ & & &29& 1.7653 $\times 10^{-04}$ & 1.8120 $\times 10^{-04}$ \\
15& 8.1765 $\times 10^{-04}$ & 8.4480 $\times 10^{-04}$ & & &30& 1.6285 $\times 10^{-04}$& 1.6710 $\times 10^{-04}$ \\
16& 7.0582 $\times 10^{-04}$ & 7.2889 $\times 10^{-04}$ & & & & & \\
&  &  & & &  & &  \\
\hline
&  &  & & &  & &  \\
30 & 1.6285 $\times 10^{-04}$& 1.6710 $\times 10^{-04}$  & & & 120 & 5.3155 $\times 10^{-06}$ & 5.3800 $\times 10^{-06}$ \\
 60 & 3.0272 $\times 10^{-05}$ & 3.0829 $\times 10^{-05}$ & & & 150 & 3.0055 $\times 10^{-06}$ & 3.0370 $\times 10^{-06}$\\
90 & 1.1010 $\times 10^{-05}$ & 1.1169 $\times 10^{-05}$  & & & 180 & 1.8801 $\times 10^{-06}$ & 1.8975 $\times 10^{-06}$ \\
\hline
\end{tabular}
\caption{Comparison of the Kullbck-Leibler divergences $D_{\text{KL}}\left(f_p^{\alpha,\mu,\sigma}||f_{p-1}^{\alpha,\mu,\sigma}\right)$ and $D_{\text{KL}}\left(f_p^{\alpha,\mu,\sigma}||f_{p+1}^{\alpha,\mu,\sigma}\right)$ for $p=2,\dots,30$ for the SEPD.}
\label{TKL30}
\end{table}

%

\begin{table}[h!]
\centering
\begin{tabular}{cccc|cccc}
\hline
p & $D_{\text{KL}}\left(f_p^{\alpha,\mu,\sigma}||f_{p-1}^{\alpha,\mu,\sigma}\right)$ & $D_{\text{KL}}\left(f_p^{\alpha,\mu,\sigma}||f_{p+1}^{\alpha,\mu,\sigma}\right)$ & & & p & $D_{\text{KL}}\left(f_p^{\alpha,\mu,\sigma}||f_{p-1}^{\alpha,\mu,\sigma}\right)$ & $D_{\text{KL}}\left(f_p^{\alpha,\mu,\sigma}||f_{p+1}^{\alpha,\mu,\sigma}\right)$ \\
\hline
2 & 0.1251 & 0.0572 & & & 17& 9.2755 $\times 10^{-04}$ & 8.5657 $\times 10^{-04}$ \\
3 & 0.0428 & 0.0262 & & &18& 8.2411 $\times 10^{-04}$ & 7.6446 $\times 10^{-04}$ \\
4 & 0.0214 & 0.0150 & & & 19& 7.3704 $\times 10^{-04}$ & 6.8644 $\times 10^{-04}$ \\
5 & 0.0128 & 0.0097 & & & 20& 6.6308 $\times 10^{-04}$ & 6.1979 $\times 10^{-04}$ \\
6 & 0.0085 & 0.0068 & & & 21& 5.9972 $\times 10^{-04}$ & 5.6239 $\times 10^{-04}$ \\
7 & 0.0061 & 0.0050 & & & 22& 5.4503 $\times 10^{-04}$ & 5.1261 $\times 10^{-04}$ \\
8 & 0.0045 & 0.0038 & & & 23& 4.9749 $\times 10^{-04}$ & 4.6916 $\times 10^{-04}$ \\
9 & 0.0035 & 0.0030 & & & 24& 4.5591 $\times 10^{-04}$ & 4.3101 $\times 10^{-04}$ \\
10& 0.0028 & 0.0025 & & &25& 4.1933 $\times 10^{-04}$ & 3.9733 $\times 10^{-04}$ \\
11& 0.0023 & 0.0020 & & & 26& 3.8698 $\times 10^{-04}$ & 3.6745 $\times 10^{-04}$ \\
12& 0.0019 & 0.0017 & & & 27& 3.5824 $\times 10^{-04}$ & 3.4082 $\times 10^{-04}$ \\
13& 0.0016 & 0.0015 & & & 28& 3.3258 $\times 10^{-04}$ & 3.1698 $\times 10^{-04}$ \\
14& 0.0014 & 0.0013 & & & 29& 3.0959 $\times 10^{-04}$ & 2.9556 $\times 10^{-04}$ \\
15& 0.0012 & 0.0011 & & & 30& 2.8890 $\times 10^{-04}$ & 2.7623 $\times 10^{-04}$ \\
16& 0.0011 & 0.0010 & & &  & &  \\
&  &  & & &  & &  \\
\hline
&  &  & & &  & &  \\
30 & 2.8890 $\times 10^{-04}$ & 2.7623 $\times 10^{-04}$  & & & 120 & 1.7531 $\times 10^{-05}$ & 1.7337 $\times 10^{-05}$  \\
60 & 7.0814 $\times 10^{-05}$ & 6.9252 $\times 10^{-05}$ & & & 150 & 1.1198 $\times 10^{-05}$ & 1.1099 $\times 10^{-05}$ \\
90 & 3.1268 $\times 10^{-05}$ & 3.0807 $\times 10^{-05}$  & & & 180 & 7.7663 $\times 10^{-06}$ & 7.7089 $\times 10^{-06}$  \\
\hline
\end{tabular}
\caption{Comparison of the Kullbck-Leibler divergences $D_{\text{KL}}\left(f_p^{\alpha,\mu,\sigma}||f_{p-1}^{\alpha,\mu,\sigma}\right)$ and $D_{\text{KL}}\left(f_p^{\alpha,\mu,\sigma}||f_{p+1}^{\alpha,\mu,\sigma}\right)$ for $p=2,\dots,30$ for the SGLD.}
\label{TKL_LD30}
\end{table}

\clearpage

\renewcommand{\thesection}{C}
\renewcommand{\theequation}{C.\arabic{equation}}
\renewcommand{\thefigure}{C.\arabic{figure}}
\renewcommand{\thetable}{C.\arabic{table}}
\setcounter{table}{0}
\setcounter{figure}{0}
\setcounter{equation}{0}

\section{Generate samples with the SEPD and the SGLD}
\label{App_sampling}

As described by \cite{Kom07} and \cite{ZhuZinde09}, the SEPD could  be simulated by using the standard skewed exponential power distribution. They highlighted the following procedure: 
\begin{itemize}
\item Generate $U\sim U (0,1)$, $W_1\sim Ga (1/p,1)$ and $W_2\sim Ga (1/p,1)$.
\item Compute the following transformed variable, 
\begin{equation}
Z=\alpha W_1^{1/p}\left[\frac{\text{sign}(U-\alpha)-1}{2K(p)\Gamma(1+1/p)}\right] + (1-\alpha) W_2^{1/p}\left[\frac{\text{sign}(U-\alpha)+1}{2K(p)\Gamma(1+1/p)}\right].\label{Sim_SEPD}
\end{equation}
where $\text{sign}(x)=1$ if $x>0$, $\text{sign}(x)=-1$ if $x<0$ and $\text{sign}(x)=0$ if $x=0$. 
 \end{itemize}
The above procedure generates an observation from the SEPD with $\mu=0$, $\sigma=1$. We can obtain a SEPD with different $\mu$ and $\sigma$ by simply multiplying $Z$ by the scale parameter and adding the location parameter. 

\medskip

\noindent Simulating from the SGLD follows a similar procedure. In particular, 
\begin{itemize}
\item Generate $U\sim U (0,1)$,  $W_1$ and $W_2$ from the distribution displayed in \eqref{BB}.
\item Compute the following transformed variable,
\begin{equation}
Z=\alpha |W_1|\left[\text{sign}(U-\alpha)-1\right] + (1-\alpha) |W_2|\left[\text{sign}(U-\alpha)+1\right],\label{Sim_SGLD}
\end{equation}
where $\text{sign}(x)=1$ if $x>0$, $\text{sign}(x)=-1$ if $x<0$ and $\text{sign}(x)=0$ if $x=0$. 
\end{itemize}
 The above procedure generates an observation from the SGLD with $\mu=0$, $\sigma=1$. We can obtain a SGLD with different $\mu$ and $\sigma$ by simply multiplying $Z$ by the scale parameter and adding the location parameter.

\end{document}